\newtheorem{theorem}{Theorem}[section]
\newtheorem{lemma}[theorem]{Lemma}
\theoremstyle{definition}
\newtheorem{proposition}[theorem]{Proposition}
\newtheorem{assumption}[theorem]{Assumption}
\newtheorem{definition}[theorem]{Definition}
\theoremstyle{example}
\newtheorem{example}[theorem]{Example}
\newcommand{\Name}[1]{\textsc{#1}}
\newcommand{\Pbft}{\Name{Pbft}}
\newcommand{\Paxos}{\Name{Paxos}}
\newcommand{\HS}{\Name{HotStuff}}
\newcommand{\CB}{\Name{Cerberus}}
\newcommand{\cCB}{\Name{CCerberus}}
\newcommand{\oCB}{\Name{OCerberus}}
\newcommand{\pCB}{\Name{PCerberus}}
\newcommand{\Object}{o}
\newcommand{\ID}[1]{\fnname{id}(#1)}
\newcommand{\Owner}[1]{\fnname{owner}(#1)}
\newcommand{\Client}{c}
\newcommand{\Transaction}{\tau}
\newcommand{\Request}[2]{\langle#1\rangle_{#2}}
\newcommand{\Inputs}[1]{\fnname{Inputs}(#1)}
\newcommand{\Outputs}[1]{\fnname{Outputs}(#1)}
\newcommand{\Objects}[1]{\fnname{Objects}(#1)}
\newcommand{\NonFaulty}[1]{\mathcal{G}(#1)}
\newcommand{\Faulty}[1]{\mathcal{F}(#1)}
\newcommand{\n}[1]{\mathbf{n}_{#1}}
\newcommand{\nf}[1]{\mathbf{g}_{#1}}
\newcommand{\f}[1]{\mathbf{f}_{#1}}
\newcommand{\z}{\mathbf{z}}
\newcommand{\Replica}[1][r]{\textnormal{\textsc{#1}}}
\newcommand{\Replicas}{\mathfrak{R}}
\newcommand{\Shard}{\mathcal{S}}
\newcommand{\Shards}[1]{\fnname{shards}(#1)}
\newcommand{\ObjectShard}[1]{\fnname{shard}(#1)}
\newcommand{\Primary}[1]{\mathcal{P}(#1)}
\newcommand{\TOrder}{\mathop{{\prec}}}
\newcommand{\rn}{\rho}
\newcommand{\View}[1][v]{#1}
\newcommand{\MName}[1]{\textnormal{\texttt{#1}}}
\newcommand{\Message}[2]{\langle\MName{#1} : #2\rangle}
\newcommand{\abs}[1]{\lvert #1 \rvert}
\newcommand{\union}{\cup}
\newcommand{\intersect}{\cap}
\newcommand{\difference}{\setminus}
\newcommand{\fnname}[1]{\mathop{\texttt{#1}}}
\newcommand{\lfref}[2]{Line~\ref{#1:#2} of Figure~\ref{#1}}
\newenvironment{requirement}{
    \begin{enumerate}
    
}{
    \end{enumerate}
}
\newenvironment{myprotocol}{
    \hrule
    \smallskip
    \begin{algorithmic}[1]
        \newcommand{\SPACE}{\item[]}
        \renewcommand{\algorithmiccomment}[1]{\{ \emph{##1} \}}

        \makeatletter
            \newcommand{\EVENT}[2][default]{\STATE \textbf{event} ##2 \textbf{do}\ifthenelse{\equal{##1}{default}}{}{\ \algorithmiccomment{##1}}\begin{ALC@g}}
            \newcommand{\ENDEVENT}{\end{ALC@g}}
        \makeatother
}{
    \end{algorithmic}
    \smallskip
    \hrule
}
\tikzset{
    plot/.append style={baseline,scale=0.535},
    dot/.append style={circle,scale=0.35,draw=black,fill=black},
    label/.append style={align=center,font=\strut\footnotesize},
    >=Stealth
}
\definecolor{colA}{RGB}{230,159,0}
\definecolor{colB}{RGB}{86,180,233}
\definecolor{colC}{RGB}{0,158,115}
\definecolor{colD}{RGB}{240,228,66}
\definecolor{colE}{RGB}{0,114,178}
\definecolor{colF}{RGB}{213,94,0}
\definecolor{colG}{RGB}{204,121,167}
\pgfplotsset{
    compat=1.14,
    tick label style={font=\large},
    legend style={font=\Large,cells={anchor=west}},
    title style={font=\Large},
    label style={font=\Large},
    width=300pt,
    height=155pt,
    every axis/.append style={
        ylabel near ticks,
        xlabel near ticks,
        mark size=1pt,
        cycle list name=mycyclelist,
        font=\Large,
        enlargelimits=0.1
    }
}
\newcommand{\plotPerformance}[3]{
    \begin{tikzpicture}[plot]
        \begin{axis}[title={Performance (\SI{#2}{\text{obj}\per\text{txn}})},
                     ylabel={Throughput (\si{\text{txn}\per\second})},xlabel={Shards},
                     xmode=log,log base x={2},ymode=log,log base y={10},xmin=1,xmax=16384,xtick={1,4,16,64,256,1024,4096,16384},#3]
            \addplot table[x={num_shards}, y={ccb_tput}] {#1};
            \addplot table[x={num_shards}, y={ocb_tput}] {#1};
            \addplot table[x={num_shards}, y={pcb_tput}] {#1};

        \end{axis}
    \end{tikzpicture}%
}
\newcommand{\plotSizes}[3]{
    \begin{tikzpicture}[plot]
        \begin{axis}[title={#2},ylabel={Steps},xlabel={Shards},
                     xmode=log,log base x={2},xmin=1,xmax=16384,xtick={1,4,16,64,256,1024,4096,16384},#3,
        y tick label style={
            /pgf/number format/precision=1,
            /pgf/number format/fixed,
            /pgf/number format/fixed zerofill
        }]
            \addplot table[x={num_shards}, y={#1}] {\dataTwo};
            \addplot table[x={num_shards}, y={#1}] {\dataFour};
            \addplot table[x={num_shards}, y={#1}] {\dataEight};
            \addplot table[x={num_shards}, y={#1}] {\dataSixteen};
            \addplot table[x={num_shards}, y={#1}] {\dataThirtyTwo};
            \addplot table[x={num_shards}, y={#1}] {\dataSixtyFour};
        \end{axis}
    \end{tikzpicture}%
}
\title{\CB{}: Minimalistic Multi-shard Byzantine-resilient Transaction Processing}
\author{
\begin{tabular}{c@{\qquad}c}
Jelle Hellings\footnotemark[1] & Daniel P.\ Hughes\footnotemark[2]\\
Joshua Primero\footnotemark[2]& Mohammad Sadoghi\footnotemark[1]
\end{tabular}}
\date{\normalsize\footnotemark[1] Exploratory Systems Lab, Department of Computer Science\\University of California, Davis, CA, 95616-8562, USA\\
    \footnotemark[2] Radix DLT Ltd, Argyle Works, 29-31 Euston Road, London, NW1 2SD}
\begin{document}

\maketitle

\begin{abstract}
To enable high-performance and scalable blockchains, we need to step away from traditional consensus-based fully-replicated designs. One direction is to explore the usage of \emph{sharding} in which we partition the managed dataset over many shards that---independently---operate as blockchains. Sharding requires an efficient  fault-tolerant primitive for the ordering and execution of multi-shard transactions, however.

In this work, we seek to design such a primitive suitable for distributed ledger networks with high transaction throughput. To do so, we propose \CB{}, a set of minimalistic primitives for processing single-shard and multi-shard UTXO-like transactions. \CB{} aims at maximizing parallel processing at shards while minimizing coordination within and between shards. First, we propose \emph{Core-\CB{}}, that uses strict environmental requirements to enable simple yet powerful multi-shard transaction processing. In our intended UTXO-environment, Core-\CB{} will operate \emph{perfectly} with respect to all transactions proposed and approved by well-behaved clients, but does not provide any guarantees for other transactions.

To also support more general-purpose environments, we propose \emph{two} generalizations of Core-\CB{}: we propose \emph{Optimistic-\CB{}}, a protocol that does not require any additional coordination phases in the well-behaved optimistic case, while requiring intricate coordination when recovering from attacks; and we propose \emph{Pessimistic-\CB{}}, a protocol that adds sufficient coordination to the well-behaved case of Core-\CB{}, allowing it to operate in a general-purpose fault-tolerant environments without significant costs to recover from attacks. Finally, we compare the three protocols, showing their potential scalability and high transaction throughput in practical environments.
\end{abstract}

\section{Introduction}

The advent of blockchain applications and technology has rejuvenated interest of companies, governments, and developers in resilient distributed fully-replicated systems and the distributed ledger technology (DLT) that powers them. Indeed, in the last decade we have seen a surge of interest in reimagining systems and build them using DLT networks. Examples can be found in the financial and banking sector~\cite{impactblock,hypereal,promiseblock}, IoT~\cite{blockchain_iot}, health care~\cite{blockhealthfac,blockhealthover}, supply chain tracking, advertising, and in databases~\cite{caper,hyperledger,blockchaindb,blockmeetdb,blockplane}. This wide interest is easily explained, as blockchains promise to improve resilience, while enabling the federated management of data by many participants.

To illustrate this, we look at the financial sector. Current traditional banking infrastructure is often rigid, slow, and creates substantial frictional costs. It is estimated that the yearly cost of transactional friction alone is \$71 billion~\cite{ehes} in the financial sector, creating a strong desire for alternatives. This sector is a perfect match for DLT, as it enables systems that manage digital assets and financial transactions in more flexible, fast, and open federated infrastructures that eliminate the friction caused by individual private databases maintained by banks and financial services providers. Consequently, it is expected that a large part of the financial sector will move towards DLT~\cite{cftc}.
 
At the core of DLT is the \emph{replicated state} maintained by the network in the form of a ledger of transactions. In traditional blockchains, this ledger is fully replicated among all participants using consensus protocols~\cite{blockchain_iot,wild,bit_pedigree,blockchain_dist}. For many practical use-cases, one can choose to use either permissionless consensus solutions  that are operated via economic self-incentivization through cryptocurrencies (e.g., Nakamoto consensus~\cite{bitcoin,ethereum}), or permissioned consensus solutions that require vetted participation (e.g, \Pbft{}~\cite{pbftj}). Unfortunately, the design of consensus protocols utilized by todays DLT networks are severely limited in their ability to provide the \emph{high transaction throughput} that is needed to address practical needs, e.g., in the financial and banking sector.

On the one hand, we see that permissionless solutions can easily scale to thousands of participants, but are severely limited in their transaction processing throughput. E.g., in Ethereum, a popular public permissionless DLT platform, the rapid growth of decentralized finance applications~\cite{defi} has caused its network fees to rise precipitously as participants bid for limited network capacity~\cite{defieth}, while Bitcoin can only process a few transactions per second~\cite{hypereal}. On the other hand, permissioned solutions can reach much higher throughputs, but still lack scalability as their performance is bound by the speed of individual participants.

In this paper, we focus on a fundamental solution to significantly increase the throughput of DLT that may apply to either permissionless or permissioned networks. While this paper primarily discuss this solution through the lens of permissioned networks, similar techniques apply to permissionless DLT with the necessary extensions for these kinds of networks, such as self-incentivization, Sybil attack protection, and tolerance of validator set churn. These kinds of permissionless networks are the focus of Radix, and their impetus for their original creation of the \CB{} concept that this paper will discuss.

A direction one can take to improve on the limited throughput of a DLT network, is to incorporate \emph{sharding} in their design: instead of operating a single fully-replicated consensus-based DLT network, one can partition the data in the DLT network among several \emph{shards} that each have the potential to operate mostly-independent on their data, while only requiring cooperation between shards to process transactions that affect data on several shards. In such a sharded design, transactions that only affect objects within a single shard can be processed via normal consensus (e.g., \Pbft{}). Transactions that affect objects within several shards require additional coordination, however. The choice of protocol for such multi-shard transaction processing determines greatly the scalability benefits of sharding and the overhead costs incurred by sharding. We have sketched a basic sharded design in Figure~\ref{fig:example_intro}.

\begin{figure}[t!]
\centering
    \scalebox{0.75}{
    \begin{tikzpicture}[xscale=1.25]
        \filldraw[very thick,fill=green!10,draw=green!10!black!30] (-0.35, 3.65) rectangle (8.35, 6.35);
        \filldraw[very thick,fill=orange!10,draw=orange!10!black!30] (-0.25, 3.75) rectangle (2.25, 6.25);
        \node (raf1) at (0, 6) {$\Replica[A]_1$};
        \node (raf2) at (2, 6) {$\Replica[A]_2$};
        \node (raf3) at (0, 4) {$\Replica[A]_3$};
        \node (raf4) at (2, 4) {$\Replica[A]_4$};
        \path[<->,thin] (raf1) edge (raf2) edge (raf3) edge (raf4)
                        (raf2) edge (raf3) edge (raf4)
                        (raf3) edge node[above] {\Pbft{}} (raf4);
        \node[below,align=center] at (1, 3.75) {(Objects $\Object_{1},\dots,\Object_{10}$)\strut};

        \filldraw[very thick,fill=blue!10,draw=blue!10!black!30] (5.75, 3.75) rectangle (8.25, 6.25);
        \node (ram1) at (6, 6) {$\Replica[B]_1$};
        \node (ram2) at (8, 6) {$\Replica[B]_2$};
        \node (ram3) at (6, 4) {$\Replica[B]_3$};
        \node (ram4) at (8, 4) {$\Replica[B]_4$};
        \path[<->,thin] (ram1) edge (ram2) edge (ram3) edge (ram4)
                        (ram2) edge (ram3) edge (ram4)
                        (ram3) edge node[above] {\Pbft{}} (ram4);
        \node[below,align=center] at (7, 3.75) {(Objects $\Object_{11},\dots,\Object_{20}$)\strut};

        \node[above,align=center] (ec) at (1, 6.65) {Request on $\Object_3, \Object_5$\strut\\(via \Pbft{})} edge[thick,->] (1, 6.25);
        \node[above,align=center] (ec) at (7, 6.65) {Request on $\Object_{12}, \Object_{17}$\strut\\(via \Pbft{})} edge[thick,->] (7, 6.25);
        \path[<->] (2.3, 5) edge node[above] {\CB{}} (5.7, 5);
        
        \node[above,align=center] (ec) at (4, 6.65) {Request on $\Object_2, \Object_{14}$\strut\\(via \CB{})} edge[thick,->] (4, 6.35);
    \end{tikzpicture}
    }\caption{A \emph{sharded} design in which two resilient blockchains each hold only a part of the data. Local decisions within a cluster are made via \emph{traditional \Pbft{} consensus}, whereas multi-shard transactions are processed via \CB{} (proposed in this work).}\label{fig:example_intro}
\end{figure}
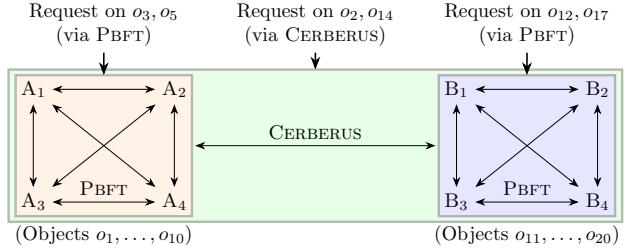

To provide multi-shard transaction processing with high throughput in practical environments with a large number of shards, including permissionless networks, Radix proposed \CB{}---a technique for performing multi-shard transactions. In this paper, we propose and analyze a family of multi-shard transaction processing protocol variants using the original \CB{} concept. To be able to adapt to the needs of specific use-cases, we propose three variants of \CB{}: Core-\CB{}, Optimistic-\CB{}, and Pessimistic-\CB{}.\footnote{The ideas underlying \CB{} was outlined in an earlier whitepaper of our Radix team available at \url{https://www.radixdlt.com/wp-content/uploads/2020/03/Cerberus-Whitepaper-v1.0.pdf}.}

First, we propose Core-\CB{} (\cCB{}), a design specialized for processing \emph{UTXO-like transactions}. \cCB{} is a simplified variant of \CB{}, and uses the strict environmental assumptions on UTXO-transactions to its advantage to yield a \emph{minimalistic} design that does as little work as possible per involved shard. Even with this minimalistic design, \cCB{} will operate \emph{perfectly} with respect to all transactions proposed and approved by well-behaved clients (although it may fail to process transactions originating from malicious clients).

Next, to also support more general-purpose environments, we propose \emph{two} generalizations of \cCB{}, namely  Optimistic-\CB{} and Pessimistic-\CB{} , that each deal with the strict environmental assumptions of \cCB{}, while preserving the minimalistic design of \cCB{}. In the design of Optimistic-\CB{} (\oCB{}), we assume that malicious behavior is rare and we optimize the normal-case operations. We do so by keeping the normal-case operations as minimalistic as possible. In specific, compared to \cCB{}, \oCB{} does not require any additional coordination phases in the well-behaved optimistic case, while still being able to lift the environmental assumptions of \cCB{}. In doing so, \oCB{} does require intricate coordination when recovering from attacks.  In the design of Pessimistic-\CB{}, we assume that malicious behavior is common and we add sufficient coordination to the normal-case operations of \cCB{} to enable a simpler and localized recovery path, allowing \pCB{} to recover from attacks at lower cost, at the expense of increased complexity in normal-case operation. Both variants we believe may be productive directions for consideration for different network deployment situations depending on desired trade-offs.

To show the strengths of each of the \CB{} protocols, we show that \CB{} can provide serializable transaction execution for UTXO-like transactions. Furthermore, we show that each of the protocol variants have excellent scalability in practice, even when exclusively dealing with multi-shard workloads. 

\paragraph*{Organization}

First, in Section~\ref{sec:prelim}, we present the terminology and notation used throughout this paper. Then, in Section~\ref{sec:correct}, we specify the \emph{correctness criteria} by which we evaluate our \CB{} multi-shard transaction processing protocols. Then, in Sections~\ref{sec:ccb},~\ref{sec:ocb}, and~\ref{sec:pcb}, we present the three variants of \CB{}, namely Core-\CB{} (\cCB{}), Optimistic-\CB{} (\oCB{}), and Pessimistic-\CB{} (\pCB{}). In Section~\ref{sec:compare}, we further analyze the practical strengths, properties, and performance of \CB{}. Then, in Section~\ref{sec:related}, we discuss related work, while we conclude on our findings in Section~\ref{sec:concl}.

\section{Preliminaries}\label{sec:prelim}

Before we proceed with our detailed presentation of \CB{}, we first introduce the system model, the sharding model, the data model, the transaction model,  and the relevant terminology and notation used throughout this paper.

\paragraph{Sharded fault-tolerant systems}
If $S$ is a set of replicas, then we write $\NonFaulty{S}$ to denote the non-faulty \emph{good replicas} in $S$ that always operate as intended, and we write $\Faulty{S} = S \difference \NonFaulty{S}$ to denote the remaining replicas in $S$ that are \emph{faulty} and can act \emph{Byzantine}, deviate from the intended operations, or even operate in coordinated malicious manners. We write $\n{S} = \abs{S}$, $\nf{S} = \abs{\NonFaulty{S}}$, and $\f{S} = \abs{S \difference \NonFaulty{S}} = \n{S} - \nf{S}$ to denote the number of replicas in $S$, good replicas in $S$, and faulty replicas in $S$, respectively.

Let $\Replicas$ be a set of replicas. In a \emph{sharded fault-tolerant system} over $\Replicas$, the replicas are partitioned into sets $\Shards{\Replicas} = \{ \Shard_0, \dots, \Shard_\z \}$ such that the replicas in $\Shard_i$, $0 \leq i \leq \z$, operate as an independent Byzantine fault-tolerant system. As each $\Shard_i$ operates as an independent fault-tolerant system, we require $\n{\Shard_i} > 3\f{\Shard_i}$, a minimal requirement to enable Byzantine fault-tolerance in an asynchronous environment~\cite{netbound,byzgen}. We assume that every shard $\Shard \in \Shards{\Replicas}$ has a unique identifier $\ID{\Shard}$.

We assume \emph{asynchronous communication}: messages can get lost, arrive with arbitrary delays, and in arbitrary order. Consequently, it is impossible to distinguish between, on the one hand, a replica that is malicious and does not send out messages, and, on the other hand, a replica that does send out proposals that get lost in the network. As such, \CB{} can only provide \emph{progress} in periods of \emph{reliable bounded-delay communication} during which all messages sent by good replicas will arrive at their destination within some maximum delay~\cite{flp,capproof}. Further, we assume that communication is \emph{authenticated}: on receipt of a message $m$ from replica $\Replica \in \Replicas$, one can determine that $\Replica$ did sent $m$ if $\Replica \in \NonFaulty{\Replicas}$. Hence, faulty replicas are able to impersonate each other, but are not able to impersonate good replicas. To provide authenticated communication under practical assumptions, we can rely on cryptographic primitives such as message authentication codes, digital signatures, or threshold signatures~\cite{rsasign,cryptobook}.

\begin{assumption}\label{ass:coordinate}
Let $\Shards{\Replicas}$ be a sharded fault-tolerant system. We assume \emph{coordinating adversaries} that can---at will---choose and control any replica $\Replica \in \Shard$ in any shard $\Shard \in \Shards{\Replicas}$ as long as, for each shard $\Shard'$, the adversaries only control up to $\f{\Shard'}$ replicas in $\Shard'$.
\end{assumption}

\paragraph{Object-dataset model}
We use the \emph{object-dataset model} in which data is modeled as a collection of \emph{objects}. Each object $\Object$ has a unique \emph{identifier} $\ID{\Object}$ and a unique \emph{owner} $\Owner{\Object}$. In the following, we assume that all owners are \emph{clients} of the system that manages these objects. The only operations that one can perform on an object are \emph{construction} and \emph{destruction}. An object cannot be recreated, as the attempted recreation of an object $\Object$ will result in a new object $\Object'$ with a distinct identifier ($\ID{\Object} \neq \ID{\Object'}$).

\paragraph{Object-dataset transactions}
Changes to object-dataset data are made via transactions requested by clients. We write $\Request{\Transaction}{\Client}$ to denote a transaction $\Transaction$ requested by a client $\Client$. We assume that all transactions are \emph{UTXO-like transactions}: a transaction $\Transaction$ first produces resources by destructing a set of \emph{input objects} and then consumes these resources in the construction of a set of \emph{output objects}. We do not rely on the exact rules regarding the production and consumption of resources, as they are highly application-specific. Given a transaction $\Transaction$, we write $\Inputs{\Transaction}$ and $\Outputs{\Transaction}$ to denote the input objects and output objects of $\Transaction$, respectively, and we write $\Objects{\Transaction} = \Inputs{\Transaction} \union \Outputs{\Transaction}$. 

\begin{assumption}\label{ass:inout}
Given a transaction $\Transaction$, we assume that one can determine $\Inputs{\Transaction}$ and $\Outputs{\Transaction}$ a-priori. Furthermore, we assume that every transaction has inputs. Hence, $\abs{\Inputs{\Transaction}} \geq 1$.
\end{assumption}

Owners of objects $\Object$ can \emph{express their support} for transactions $\Transaction$ that have $\Object$ as their input. To provide this functionality, we can rely on cryptographic primitives such as digital signatures~\cite{cryptobook}.

\begin{assumption}\label{ass:wellclient}
If an owner is well-behaved, then an expression of support cannot be forged or provided by any other party. Furthermore, a well-behaved owner of $\Object$ will only express its support for \emph{a single} transaction $\Transaction$ with $\Object \in \Inputs{\Transaction}$, as only one transaction can consume the object $\Object$, and the owner will only do so after the construction of $\Object$.
\end{assumption} 
\paragraph{Multi-shard transactions}
Let $\Object$ be an object. We assume that there is a well-defined function $\ObjectShard{\Object}$ that maps object $\Object$ to the single shard $\Shard \in \Shards{\Replicas}$ that is responsible for maintaining $\Object$. Given a transaction $\Transaction$, we write \[\Shards{\Transaction} = \{ \ObjectShard{\Object} \mid \Object \in \Objects{\Transaction} \}\] to denote the shards that are affected by $\Transaction$. We say that $\Transaction$ is a \emph{single-shard transaction} if $\abs{\Shards{\Transaction}} = 1$ and is a \emph{multi-shard transaction} otherwise. We assume

\begin{assumption}\label{ass:shard_minimality}
Let $D(\Shard)$ be the dataset maintained by shard $\Shard$. We have $\Object \in D(\Shard)$ only if $\ObjectShard{\Object} = \Shard$.
\end{assumption}

\section{Correctness of multi-shard transaction processing}\label{sec:correct}

Before we introduce \CB{}, we put forward the correctness requirements we want to maintain in a multi-shard transaction system in which each shard is itself a set of replicas operated as a Byzantine fault-tolerant system. We say that a shard $\Shard$ performs an action if every good replica in $\NonFaulty{\Shard}$ performs that action. Hence, any processing decision or execution step performed by $\Shard$ requires the usage of a \emph{consensus protocol} to coordinate the replicas in $\Shard$:

\paragraph{Fault-tolerant primitives}
At the core of resilient systems are \emph{consensus protocols}~\cite{pbftj,paxossimple,blockchain_iot,wild} that coordinate the operations of individual replicas in the system, e.g., a Byzantine fault-tolerant system driven by \Pbft{}~\cite{pbftj} or \HS{}~\cite{hotstuff}, or a crash fault-tolerant system driven by \Paxos{}~\cite{paxossimple}. As these systems are fully-replicated, each replica holds exactly the same data, which is determined by the \emph{sequence of transactions}---the journal---agreed upon via consensus:

\begin{definition}\label{def:consensus}
A \emph{consensus protocol} coordinate decision making among the replicas of a resilient cluster $\Shard$ by providing a reliable ordered replication of \emph{decisions}. To do so, consensus protocols provide the following guarantees:
\begin{enumerate}
\item If good replica $\Replica \in \Shard$ makes a $\rn$-th decision, then all good replicas $\Replica' \in \Shard$ will make a $\rn$-th decision (whenever communication becomes reliable).
\item If good replicas $\Replica, \Replica[q] \in \Shard$ make $\rn$-th decisions, then they make the same decisions.
\item Whenever a good replica learns that a decision $D$ needs to be made, then it can force consensus on $D$.
\end{enumerate}
\end{definition}

Let $\Transaction$ be a transaction processed by a sharded fault-tolerant system. Processing of $\Transaction$ does not imply execution: the transaction could be invalid (e.g., the owners of affected objects did not express their support) or the transaction could have inputs that no longer exists. We say that the system \emph{commits} to  $\Transaction$ if it decides to apply the modifications prescribed by $\Transaction$, and we say that the system \emph{aborts} $\Transaction$ if it decides to not do so. Using this terminology, we put forward the following requirements for any sharded fault-tolerant system:
\begin{requirement}
\item\label{req:validity} \emph{Validity}. The system must only processes transaction $\Transaction$ if, for every input object $\Object \in \Inputs{\Transaction}$ with a well-behaved owner $\Owner{\Object}$, the owner $\Owner{\Object}$ supports the transaction.
\item\label{req:shard_inv} \emph{Shard-involvement}. The shard $\Shard$ only processes transaction $\Transaction$ if $\Shard \in \Shards{\Transaction}$.
\item\label{req:shard_app} \emph{Shard-applicability}. Let $D(\Shard)$ be the dataset maintained by shard $\Shard$ at time $t$. The shards $\Shards{\Transaction}$ only commit to execution of transaction $\Transaction$ at $t$ if $\Transaction$ consumes only existing objects. Hence, $\Inputs{\Transaction} \subseteq \bigcup \{ D(\Shard) \mid \Shard \in \Shards{\Transaction} \}$.
\item\label{req:shard_const} \emph{Cross-shard-consistency}. If shard $\Shard$ commits (aborts) transaction $\Transaction$, then all shards $\Shard' \in \Shards{\Transaction}$ eventually commit (abort) $\Transaction$. 
\item\label{req:shard_service} \emph{Service}. If client $\Client$ is well-behaved and wants to request a valid transaction $\Transaction$, then the sharded system will eventually \emph{process} $\Request{\Transaction}{\Client}$. If $\Transaction$ is shard-applicable, then the sharded system will eventually \emph{execute} $\Request{\Transaction}{\Client}$.
\item\label{req:shard_confirm} \emph{Confirmation}. If the system processes $\Request{\Transaction}{\Client}$ and $\Client$ is well-behaved, then $\Client$ will eventually learn whether $\Transaction$ is committed or aborted.
\end{requirement} 
We notice that shard-involvement is a \emph{local requirement}, as individual shards can determine whether they need to process a given transaction. In the same sense, shard-applicability and cross-shard-consistency are \emph{global} requirements, as assuring these requirements requires coordination between the shards affected by a transaction.

\section{Core-\CB{}: simple yet efficient transaction processing}\label{sec:ccb}

The core idea of \CB{} is to minimize the coordination necessary for multi-shard ordering and execution of transactions. To do so, \CB{} combines the semantics of transactions in the object-dataset model with the minimal coordination required to assure shard-applicability and cross-shard consistency. This combination results in the following high-level three-step approach towards processing any transaction $\Transaction$:

\begin{enumerate}
    \item \emph{Local inputs}. First, every affected shard $\Shard \in \Shards{\Transaction}$ locally determines whether it has all inputs from $\Shard$ that are necessary to process $\Transaction$.
    \item \emph{Cross-shard exchange}. Then, every affected shard $\Shard \in \Shards{\Transaction}$ exchanges these inputs to all other shards in $\Shards{\Transaction}$, thereby pledging to use their local inputs in the execution of $\Transaction$.
    \item \emph{Decide outcome} Finally, every affected shard $\Shard \in \Shards{\Transaction}$ decides to commit $\Transaction$ if all affected shards were able to provide all local inputs necessary for execution of $\Transaction$.
\end{enumerate}

Next, we describe how these three high-level steps are incorporated by $\CB$ into normal consensus steps at each shards. Let shard $\Shard \in \Shards{\Replicas}$ receive client request $\Request{\Transaction}{\Client}$. The good replicas in $\Shard$ will first determine whether $\Transaction$ is valid and applicable. If $\Transaction$ is not valid or $\Shard \notin \Shards{\Transaction}$, then the good replicas discard $\Transaction$. Otherwise, if $\Transaction$ is valid and $\Shard \in \Shards{\Transaction}$, then the good replicas utilize \emph{consensus} to force the primary $\Primary{\Shard}$ to propose in some consensus round $\rn$ the message $m(\Shard, \Transaction)_{\rn} =  (\Request{\Transaction}{\Client}, I(\Shard, \Transaction), D(\Shard, \Transaction))$, in which $I(\Shard, \Transaction) = \{ \Object \in \Inputs{\Transaction} \mid \Shard = \ObjectShard{\Object} \}$ is the set of objects maintained by $\Shard$ that are input to $\Transaction$ and $D(\Shard, \Transaction) \subseteq I(\Shard, \Transaction)$ is the set of currently-available inputs at $\Shard$. Only if  $I(\Shard, \Transaction) = D(\Shard, \Transaction)$ will $\Shard$ pledge to use the local inputs $I(\Shard, \Transaction)$ in the execution of $\Transaction$.

The acceptance of $m(\Shard, \Transaction)_{\rn}$ in round $\rn$ by all good replicas completes the \emph{local inputs} step. Next, during execution of $\Transaction$, the \emph{cross-shard exchange} and \emph{decide outcome} steps are performed. First, the \emph{cross-shard exchange} step. In this step, $\Shard$ broadcasts $m(\Shard, \Transaction)_{\rn}$ to all other shards in $\Shards{\Transaction}$. To assure that the broadcast arrives, we rely on a reliable primitive for \emph{cross-shard exchange}, e.g., via an efficient cluster-sending protocol~\cite{disc_csp,geobft}. Then, the replicas in $\Shard$ wait until they receive messages $m(\Shard', \Transaction)_{\rn'} = (\Request{\Transaction}{\Client}, I(\Shard', \Transaction), D(\Shard', \Transaction))$ from all other shards $\Shard' \in \Shards{\Transaction}$.

After cross-shard exchange comes the final \emph{decide outcome} step. After $\Shard$ receives $m(\Shard', \Transaction)_{\rn'}$ from all shards $\Shard' \in \Shards{\Transaction}$, it decides to \emph{commit} whenever  $I(\Shard', \Transaction) = D(\Shard', \Transaction)$ for all $\Shard' \in \Shards{\Transaction}$. Otherwise, it decides \emph{abort}. If $\Shard$ decides commit, then all good replicas in $\Shard$ destruct all objects in $D(\Shard, \Transaction)$ and construct all objects $\Object \in \Outputs{\Transaction}$ with $\Shard = \ObjectShard{\Object}$. Finally, each good replica informs $\Client$ of the outcome of execution. If $\Client$ receives, from every shard $\Shard'' \in \Shards{\Transaction}$, identical outcomes from $\nf{\Shard''} - \f{\Shard''}$ distinct replicas in $\Shard''$, then it considers $\Transaction$ to be successfully executed. In Figure~\ref{fig:flow_ccb}, we sketched the working of \cCB{}. 

\begin{figure}[t!]
    \centering
    \scalebox{0.825}{
    \begin{tikzpicture}[yscale=0.625,xscale=1.45,every edge/.append style={thick},label/.append style={below=5pt,align=center,font=\footnotesize}]
        \node[left] (c) at  (0.75, 3.5) {$\Client$};
        \node[left] (s1) at (0.75, 2) {$\Shard_1$};
        \node[left] (s2) at (0.75, 1) {$\Shard_2$};
        \node[left] (s3) at (0.75, 0) {$\Shard_3$};
        
        \path (0.75, 0) edge ++(9.45, 0)
              (0.75, 1) edge ++(9.45, 0)
              (0.75, 2) edge ++(9.45, 0)
              (0.75, 3.5) edge[green!70!black] ++(9.45, 0);

        \path (1, 0) edge ++(0, 3.5)
              (2, 0) edge ++(0, 3.5)
              (5, 0) edge ++(0, 3.5)
              (6, 0) edge ++(0, 3.5)
              (9, 0) edge ++(0, 3.5)
              (10, 0) edge ++(0, 3.5);

        \path[->] (1, 3.5) edge node[above] {$\Request{\Transaction}{\Client}$} (2, 2) edge  (2, 1) edge  (2, 0)
                  (5, 2) edge (6, 1) edge (6, 0)
                  (5, 1) edge (6, 2) edge (6, 0)
                  (5, 0) edge (6, 2) edge (6, 1)
                  (9, 0) edge (10, 3.5)
                  (9, 1) edge (10, 3.5)
                  (9, 2) edge (10, 3.5);
        
        \draw[draw=orange!60,fill=orange!40,rounded corners] (2.1, -0.4) rectangle (4.9, 0.4);
        \draw[draw=orange!60,fill=orange!40,rounded corners] (2.1,  0.6) rectangle (4.9, 1.4);
        \draw[draw=orange!60,fill=orange!40,rounded corners] (2.1,  1.6) rectangle (4.9, 2.4);
        \node at (3.5, 0) {Consensus on $\Request{\Transaction}{\Client}$};
        \node at (3.5, 1) {Consensus on $\Request{\Transaction}{\Client}$};
        \node at (3.5, 2) {Consensus on $\Request{\Transaction}{\Client}$};

        \draw[draw=yellow!60,fill=yellow!40,rounded corners] (6.1, -0.4) rectangle (8.9, 0.4);
        \draw[draw=yellow!60,fill=yellow!40,rounded corners] (6.1,  0.6) rectangle (8.9, 1.4);
        \draw[draw=yellow!60,fill=yellow!40,rounded corners] (6.1,  1.6) rectangle (8.9, 2.4);
        \node at (7.5, 0) {Wait for Commit/Abort};
        \node at (7.5, 1) {Wait for Commit/Abort};
        \node at (7.5, 2) {Wait for Commit/Abort};

        \node[label] at (3.5, 0) {Local Inputs\\(Consensus)};
        \node[label] at (5.5, 0) {Cross-Shard Exchange\\(Cluster-Sending)};
        \node[label] at (7.5, 0) {Decide Outcome};
        \node[label] at (9.5, 0) {Inform};
    \end{tikzpicture}
    }
    \caption{The message flow of \cCB{} for a $3$-shard client request $\Request{\Transaction}{\Client}$ that is committed.}\label{fig:flow_ccb}
\end{figure}

The \emph{cross-shard exchange} step of \cCB{} at $\Shard$ involves waiting for other shards $\Shard'$. Hence, there is the danger of deadlocks if the other shards $\Shard'$ never perform the cross-shard exchange step:

\begin{example}\label{ex:ccb_concurrent}
Consider distinct transactions $\Request{\Transaction_1}{\Client_1}$ and $\Request{\Transaction_2}{\Client_2}$ that both affect objects $\Object_1$ in $\Shard_1$ and $\Object_2$ in $\Shard_2$. Hence, we have $\Inputs{\Transaction_1} = \Inputs{\Transaction_2} = \{ \Object_1, \Object_2 \}$ and with $\ObjectShard{\Object_1} = \Shard_1$ and $\ObjectShard{\Object_2} = \Shard_2$. We assume that $\Shard_1$ processes $\Transaction_1$ first and $\Shard_2$ processes $\Transaction_2$ first. Shard $\Shard_1$ will start by sending $m(\Shard, \Transaction_1)_{\rn_1} = (\Request{\Transaction_1}{\Client_1}, \{ \Object_1 \}, \{ \Object_1 \})$ to $\Shard_2$. Next, $\Shard_1$ will wait, during which it receives $\Transaction_2$. At the same time, $\Shard_2$ follows similar steps for $\Transaction_2$ and sends $m(\Shard, \Transaction_2)_{\rn_2} = (\Request{\Transaction_2}{\Client_2}, \{ \Object_2 \}, \{ \Object_2 \})$ to $\Shard_1$. As a result, $\Shard_1$ is waiting for information on $\Transaction_1$ from $\Shard_2$, while $\Shard_2$ is waiting for information on $\Transaction_2$ from $\Shard_1$.
\end{example}

To assure that the above example does not lead to a deadlock, we employ two techniques.
\begin{enumerate}
\item \emph{Internal propagation}. To deal with situations in which some shards $\Shard \in \Shards{\Transaction}$ did not receive $\Request{\Transaction}{\Client}$ (e.g., due to network failure or due to a faulty client that fails to send $\Request{\Transaction}{\Client}$ to some shards), we allow each shard to learn $\Transaction$ from any other shard. In specific, any shard $\Shard \in \Shards{\Transaction}$ will start consensus on $\Request{\Transaction}{\Client}$ after receiving \emph{cross-shard exchange} related to $\Request{\Transaction}{\Client}$.

\item \emph{Concurrent resolution}. To deal with concurrent transactions, as in Example~\ref{ex:ccb_concurrent}, we allow each shard to accept and execute transactions for different rounds concurrently. To assure that such concurrent execution does not lead to inconsistent state updates, each replica implements the following \emph{first-pledge} and \emph{ordered-commit} rules. Let $\Transaction$ be a transaction with $\Shard \in \Shards{\Transaction}$ and $\Replica \in \Shard$. The \emph{first-pledge} rule states that $\Shard$ pledges $\Object$, constructed in round $\rn$, to transaction $\Transaction$ only if $\Transaction$ is the first transaction proposed after round $\rn$ with $\Object \in \Inputs{\Transaction}$. The \emph{ordered-commit} rule states that $\Shard$ can abort $\Transaction$ in any order, but will only commit $\Transaction$ that is accepted in round $\rn$ after previous rounds finished execution.
\end{enumerate}

Next, we apply the above two techniques to the situation of Example~\ref{ex:ccb_concurrent}:
\begin{example}
While $\Shard_1$ is waiting for $\Transaction_1$, it received cross-shard exchange related to $\Request{\Transaction_2}{\Client_2}$. Hence, in a future round $\rn_1' > \rn_1$, it can propose and accept $\Request{\Transaction_2}{\Client_2}$. By the first-pledge rule, $\Shard_1$ already pledged $\Object_1$ to the execution of $\Transaction_1$. Hence, it cannot pledge any objects to the execution of $\Transaction_2$. Consequently, $\Shard_1$ will eventually be able to send $m(\Shard_1, \Transaction_2)_{\rn_1'} = (\Request{\Transaction_2}{\Client_2}, \{ \Object_1 \}, \emptyset)$ to $\Shard_2$. Likewise, $\Shard_2$ will eventually be able to send $m(\Shard_2, \Transaction_1)_{\rn_2'} = (\Request{\Transaction_1}{\Client_1}, \{ \Object_2 \}, \emptyset)$ to $\Shard_1$. Consequently, both shards decide abort on $\Transaction_1$ and $\Transaction_2$, which they can do in any order due to the ordered-commit rule.
\end{example}

Finally, we notice that abort decisions at shard $\Shard$ on a transaction $\Transaction$ can often be made without waiting for all shards $\Shard' \in \Shards{\Transaction}$. Shard $\Shard$ can decide abort after it detects $I(\Shard, \Transaction) \neq D(\Shard, \Transaction)$ or after it receives the first message $(\Request{\Transaction}{\Client}, I(\Shard'', \Transaction), D(\Shard'', \Transaction))$ with $I(\Shard'', \Transaction) \neq D(\Shard'', \Transaction)$, $\Shard''\in \Shards{\Transaction}$. For efficiency, we allow $\Shard$ to abort in these cases.

\begin{theorem}\label{thm:ccb}
If, for all shards $\Shard^{\ast}$, $\nf{\Shard^{\ast}} > 2\f{\Shard^{\ast}}$, and Assumptions~\ref{ass:coordinate}, \ref{ass:inout}, \ref{ass:wellclient}, and \ref{ass:shard_minimality} hold, then Core-\CB{} satisfies Requirements~\ref{req:validity}--\ref{req:shard_confirm} with respect to all transactions that are not requested by malicious clients and that do not involve objects with malicious owners.
\end{theorem}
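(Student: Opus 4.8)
The plan is to first reduce reasoning about whole shards to reasoning about single honest agents, and then discharge the six requirements, grouping the safety requirements (\ref{req:validity}--\ref{req:shard_const}) and the liveness requirements (\ref{req:shard_service}--\ref{req:shard_confirm}). The key preliminary observation is that Assumption~\ref{ass:coordinate} guarantees every shard $\Shard'$ satisfies $\nf{\Shard'} > 2\f{\Shard'}$ (equivalently $\n{\Shard'} > 3\f{\Shard'}$), so by Definition~\ref{def:consensus} the good replicas of each shard agree on a single ordered journal of decisions and can force any needed decision. Hence I can treat each shard as a single correct participant: for a transaction $\Transaction$ and a shard $\Shard' \in \Shards{\Transaction}$, the message $m(\Shard', \Transaction) = (\Request{\Transaction}{\Client}, I(\Shard', \Transaction), D(\Shard', \Transaction))$ accepted in the local-inputs step is uniquely determined, and the reliable cross-shard exchange (cluster-sending) primitive delivers this single message to every other shard in $\Shards{\Transaction}$. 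I would isolate this as a lemma, since every later argument relies on all shards eventually sharing the same collection $\{ m(\Shard', \Transaction) \mid \Shard' \in \Shards{\Transaction} \}$.

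For the local checks, Requirement~\ref{req:validity} (validity) and Requirement~\ref{req:shard_inv} (shard-involvement) follow from the protocol's admission rules: good replicas refuse to force consensus on $\Transaction$ unless $\Transaction$ is valid and $\Shard \in \Shards{\Transaction}$, and by Assumption~\ref{ass:wellclient} the support of a well-behaved input owner cannot be forged, so a valid transaction genuinely carries the support of every well-behaved input owner. I would additionally note that internal propagation only causes a shard to start consensus on $\Transaction$ after it receives cross-shard exchange for $\Transaction$, which can only reach shards in $\Shards{\Transaction}$, so Requirement~\ref{req:shard_inv} survives the propagation path as well.

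The heart of the argument is the joint treatment of Requirement~\ref{req:shard_app} (shard-applicability) and Requirement~\ref{req:shard_const} (cross-shard-consistency). Using the preliminary lemma, every shard in $\Shards{\Transaction}$ evaluates the \emph{same} deterministic predicate---commit iff $I(\Shard', \Transaction) = D(\Shard', \Transaction)$ for all $\Shard' \in \Shards{\Transaction}$---over the \emph{same} message collection, so all shards reach identical commit/abort decisions, which is Requirement~\ref{req:shard_const}. For Requirement~\ref{req:shard_app} I must show a commit decision is sound, namely that the pledged inputs are still available at commit time, and this is precisely where the first-pledge and ordered-commit rules do the work: the first-pledge rule ensures each constructed object is pledged to at most one transaction, so a shard reporting $I(\Shard', \Transaction) = D(\Shard', \Transaction)$ has reserved those objects for $\Transaction$ alone, while the ordered-commit rule forces commits to be applied in round order so that no earlier committed transaction can have destructed a pledged object. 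Composing the per-shard pledges under a commit then yields $\Inputs{\Transaction} \subseteq \bigcup \{ D(\Shard', \Transaction) \mid \Shard' \in \Shards{\Transaction} \}$, establishing Requirement~\ref{req:shard_app}.

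For liveness I would invoke reliable bounded-delay communication together with the consensus guarantee that a good replica can force any needed decision. For Requirement~\ref{req:shard_service} (service), given a well-behaved client and a valid, shard-applicable $\Transaction$, each involved shard forces consensus on $\Transaction$; because the theorem excludes objects with malicious owners, Assumption~\ref{ass:wellclient} forbids any competing \emph{valid} transaction on the same inputs, so the first-pledge rule pledges each input to $\Transaction$, every shard reports $I = D$, and $\Transaction$ commits, with internal propagation and concurrent resolution ensuring no shard stalls (the general version of the deadlock-free behavior illustrated in Example~\ref{ex:ccb_concurrent}). For Requirement~\ref{req:shard_confirm} (confirmation), once the outcome is decided every good replica informs the client, and since $\nf{\Shard''} > 2\f{\Shard''}$ the client receives the single true outcome from at least $\nf{\Shard''} - \f{\Shard''}$ replicas of each shard while the at most $\f{\Shard''}$ faulty replicas cannot push the opposite outcome past that threshold, so the client unambiguously learns commit or abort. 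I expect the main obstacle to be the rigorous version of Requirement~\ref{req:shard_app}/Requirement~\ref{req:shard_const} under concurrency: turning the first-pledge and ordered-commit rules into invariants that hold across arbitrary interleavings of rounds and transactions, so that local availability assertions made at different rounds in different shards provably compose into a globally consistent commit, and the single worked instance of Example~\ref{ex:ccb_concurrent} becomes a general termination claim.
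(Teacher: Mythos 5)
Your proposal is correct and follows essentially the same route as the paper: local admission rules give validity and shard-involvement, the determinism of the decide-outcome predicate over the (cluster-sending-guaranteed) identical message collection gives cross-shard-consistency, and internal propagation plus the honest-majority inform step gives service and confirmation. The only notable difference is that you treat shard-applicability more carefully than the paper (which dismisses it as following ``directly from the decide outcome step''), by explicitly invoking the first-pledge and ordered-commit rules to argue that pledged inputs remain available at commit time---a worthwhile elaboration, not a divergence.
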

\begin{proof}
Let $\Transaction$ be a transaction. As good replicas in $\Shard$ discard $\Transaction$ if it is invalid or if $\Shard \notin \Shards{\Transaction}$,  \cCB{} provides \emph{validity} and \emph{shard-involvement}. Next, \emph{shard-applicability} follow directly from the decide outcome step.

If a shard $\Shard$ commits or aborts transaction $\Transaction$, then it must have completed the decide outcome and cross-shard exchange steps. As $\Shard$ completed cross-shard exchange, all shards $\Shard' \in \Shards{\Transaction}$ must have exchanged the  necessary information to $\Shard$. By relying on cluster-sending for cross-shard exchange, $\Shard'$ requires cooperation of all good replicas in $\Shard'$ to exchange the necessary information to $\Shard$. Hence, we have the guarantee that these good replicas will also perform cross-shard exchange to any other shard $\Shard'' \in \Shards{\Transaction}$. As such, every shard $\Shard'' \in \Shards{\Transaction}$ will receive the same information as $\Shard$, complete cross-shard exchange, and make the same decision during the decide outcome step, providing \emph{cross-shard consistency}.

Finally, due to internal propagation and concurrent resolution, every valid transaction $\Transaction$ will be processed by \cCB{} as soon as it is send to any shard $\Shard \in \Shards{\Transaction}$. Hence, every shard in $\Shards{\Transaction}$ will perform the necessary steps to eventually inform the client. As all good replicas $\Replica \in \Shard$, $\Shard \in \Shards{\Transaction}$, will inform the client of the outcome for $\Transaction$, the majority of these inform-messages come from good replicas, enabling the client to reliably derive the true outcome. Hence, \cCB{} provides \emph{service} and \emph{confirmation}.
\end{proof}

Notice that in the object-dataset model in which we operate, each object can be constructed once and destructed once. Hence, each object $\Object$ can be part of at-most two committed transactions: the first of which will construct $\Object$ as an output, and the second of which has $\Object$ as an input and will consume and destruct $\Object$. This is independent of any other operations on other objects.  As such these two transactions \emph{cannot} happen concurrently. Consequently, we only have concurrent transactions on $\Object$ if the owner $\Owner{\Object}$ expresses support for several transactions that have $\Object$ as an input. By Assumption~\ref{ass:wellclient}, the owner $\Owner{\Object}$ must be malicious in that case. As such, transactions of well-behaved clients and owners will \emph{never abort}.

In the design of \cCB{}, we take advantage of this observation that \emph{aborts} are always due to malicious behavior by clients and owners of objects: to minimize coordination while allowing graceful resolution of concurrent transactions, we do not undo any pledges of objects to the execution of any transactions. This implies that objects that are involved in malicious transactions can get lost for future usage, while not affecting any transactions of well-behaved clients.

\section{Optimistic-\CB{}: robust transaction processing}\label{sec:ocb}
In the previous section, we introduced \cCB{}, a minimalistic and efficient multi-shard transaction processing protocol that relies on practical properties of UTXO-like transactions. Although the design of \cCB{} is simple yet effective, we see two shortcomings that limits its use cases. First, \cCB{} operates under the assumption that any issues arising from concurrent transactions is due to malicious behavior of clients. As such, \cCB{} chooses to lock out objects affected by such malicious behavior for any future usage. Second, \cCB{} requires consecutive consensus and cluster-sending steps, which increases its transaction processing latencies. Next, we investigate how to deal with these weaknesses of \cCB{} \emph{without giving up} on the minimalistic nature of \cCB{}.

To do so, we propose Optimistic-\CB{} (\oCB{}), which is optimized for the \emph{optimistic} case in which we have no concurrent transactions, while providing a recovery path that can recover from concurrent transactions without locking out objects. At the core of \oCB{} is assuring that any issues due to malicious behavior, e.g., concurrent transactions, are \emph{detected} in such a way that individual replicas can start a recovery process. At the same time, we want to minimize transaction processing latencies. To bridge between these two objectives, we integrate detection and cross-shard coordination within a single consensus round that runs at each affected shard.

Let $\Request{\Transaction}{\Client}$ be a multi-shard transaction, let $\Shard \in \Shards{\Transaction}$ be a shard with primary $\Primary{\Shard}$, and let $m(\Shard, \Transaction)_{\View, \rn}$ be the round-$\rn$ proposal of $\Primary{\Shard}$ of view $\View$ of $\Shard$. To enable detection of concurrent transactions, \oCB{} modifies the consensus-steps of the underlying consensus protocol by applying the following high-level idea:

\begin{quote}
A replica $\Replica \in \Shard$, $\Shard \in \Shards{\Transaction}$, only accepts proposal $m(\Shard, \Transaction)_{\View, \rn} = (\Request{\Transaction}{\Client}, I(\Shard, \Transaction), D(\Shard, \Transaction))$ for some transaction $\Transaction$ if it gets confirmation that replicas in each other shard $\Shard' \in \Shards{\Transaction}$ are also accepting proposals for $\Transaction$. Otherwise, replica $\Replica$ detects failure.
\end{quote}

Next, we illustrate how to integrate the above idea in the three-phase design of \Pbft{}, thereby turning \Pbft{} into a multi-shard aware consensus protocol:
\begin{enumerate}
    \item \emph{Global preprepare}. Primary $\Primary{\Shard}$ must send $m(\Shard, \Transaction)_{\View, \rn}$ to all replicas $\Replica' \in \Shard'$, $\Shard' \in \Shards{\Transaction}$. Replica $\Replica \in \Shard$ only finishes the global preprepare phase after it receives a \emph{global preprepare certificate} consisting of a set $M = \{ m(\Shard'', \Transaction)_{\View'', \rn''} \mid \Shard'' \in \Shards{\Transaction} \}$ of preprepare messages from all primaries of shards affected by $\Transaction$.
    \item \emph{Global prepare}. After $\Replica \in \Shard$, $\Shard \in \Shards{\Transaction}$, finishes the global preprepare phase, it sends prepare messages for $M$ to all other replicas in $\Replica' \in \Shard'$, $\Shard' \in \Shards{\Transaction}$. Replica $\Replica \in \Shard$ only finishes the global prepare phase for $M$ after, for every shard $\Shard' \in \Shards{\Transaction}$, it receives a \emph{local prepare certificate} consisting of a set $P(\Shard')$ of prepare messages for $M$ from $\nf{\Shard'}$ distinct replicas in $\Shard'$. We call the set $\{ P(\Shard'') \mid \Shard'' \in \Shards{\Transaction} \}$ a \emph{global prepare certificate}.
    \item \emph{Global commit}. After replica $\Replica \in \Shard$, $\Shard \in \Shards{\Transaction}$, finishes the global prepare phase, it sends commit messages for $M$ to all other replicas in $\Replica' \in \Shard'$, $\Shard' \in \Shards{\Transaction}$. Replica $\Replica \in \Shard$ only finishes the global commit phase for $M$ after, for every shard $\Shard' \in \Shards{\Transaction}$, it receives a \emph{local commit certificate} consisting of a set $C(\Shard')$ of commit messages for $M$ from $\nf{\Shard'}$ distinct replicas in $\Shard'$. We call the set $\{ P(\Shard'') \mid \Shard'' \in \Shards{\Transaction} \}$ a \emph{global commit certificate}.
\end{enumerate}

To minimize inter-shard communication, one can utilize threshold signatures and cluster-sending to carry over local prepare and commit certificates between shards via a few constant-sized messages. The above three-phase \emph{global-\Pbft{}} protocol already takes care of the \emph{local input} and \emph{cross-shard exchange} steps. Indeed, a replica $\Replica \in \Shard$ that finishes the global commit phase has accepted global preprepare certificate $M$, which contains all information of other shards to proceed with execution. At the same time, $\Replica$ also has confirmation that $M$ is prepared by a majority of all good replicas in each shard $\Shard' \in \Shards{\Transaction}$ (which will eventually be followed by execution of $\Transaction$ within $\Shard'$). With these ingredients in place, only the \emph{decide outcome} step remains.

The decide outcome step at shard $\Shard$ is entirely determined by the global preprepare certificate $M$. Shard $\Shard$ decides to \emph{commit} whenever $I(\Shard', \Transaction) = D(\Shard', \Transaction)$ for all $(\Request{\Transaction}{\Client}, I(\Shard', \Transaction), D(\Shard', \Transaction)) \in M$. Otherwise, it decides \emph{abort}. If $\Shard$ decides commit, then all good replicas in $\Shard$ destruct all objects in $D(\Shard, \Transaction)$ and construct all objects $\Object \in \Outputs{\Transaction}$ with $\Shard = \ObjectShard{\Object}$. Finally, each good replica informs $\Client$ of the outcome of execution.  If $\Client$ receives, from every shard $\Shard' \in \Shards{\Transaction}$, identical outcomes from $\nf{\Shard'} - \f{\Shard}$ distinct replicas in $\Shard'$, then it considers $\Transaction$ to be successfully executed. In Figure~\ref{fig:flow_ocb}, we sketched the working of \oCB{}. 

\begin{figure}[t!]
    \centering
    \scalebox{0.825}{
    \begin{tikzpicture}[yscale=0.625,xscale=1.45,every edge/.append style={thick},label/.append style={below=5pt,align=center,font=\footnotesize}]
        \node[left] (c) at  (0.75, 3.5) {$\Client$};
        \node[left] (s1) at (0.75, 2) {$\Shard_1$};
        \node[left] (s2) at (0.75, 1) {$\Shard_2$};
        \node[left] (s3) at (0.75, 0) {$\Shard_3$};
        
        \draw[draw=orange!60,fill=orange!40,rounded corners] (2.1, -0.4) rectangle (5.9, 2.4);
        \path (0.75, 0) edge ++(9.45, 0)
              (0.75, 1) edge ++(9.45, 0)
              (0.75, 2) edge ++(9.45, 0)
              (0.75, 3.5) edge[green!70!black] ++(9.45, 0);

        \path (1, 0) edge ++(0, 3.5)
              (2, 0) edge ++(0, 3.5)
              (6, 0) edge ++(0, 3.5)
              (9, 0) edge ++(0, 3.5)
              (10, 0) edge ++(0, 3.5);

        \path[->] (1, 3.5) edge node[above] {$\Request{\Transaction}{\Client}$} (2, 2) edge  (2, 1) edge  (2, 0)
                  (9, 0) edge (10, 3.5)
                  (9, 1) edge (10, 3.5)
                  (9, 2) edge (10, 3.5);
        
        \path[->] (2.5, 2) edge (3.5, 1) edge (3.5, 0)
                  (2.5, 1) edge (3.5, 2) edge (3.5, 0)
                  (2.5, 0) edge (3.5, 1) edge (3.5, 2)
                  (3.5, 2) edge (4.5, 1) edge (4.5, 0)
                  (3.5, 1) edge (4.5, 2) edge (4.5, 0)
                  (3.5, 0) edge (4.5, 1) edge (4.5, 2)
                  (4.5, 2) edge (5.5, 1) edge (5.5, 0)
                  (4.5, 1) edge (5.5, 2) edge (5.5, 0)
                  (4.5, 0) edge (5.5, 1) edge (5.5, 2);

        \node[below, label] at (3, 0) {Preprepare};
        \node[below, label] at (4, 0) {Prepare};
        \node[below, label] at (5, 0) {Commit};
        \draw[decoration={brace, mirror},decorate,thick] (2.1, -1) -- (5.9, -1);
        \node[label] at (4, -1) {Local Inputs and Cross-Shard Exchange\\(Global Consensus)};
        \draw[draw=yellow!60,fill=yellow!40,rounded corners] (6.1, -0.4) rectangle (8.9, 0.4);
        \draw[draw=yellow!60,fill=yellow!40,rounded corners] (6.1,  0.6) rectangle (8.9, 1.4);
        \draw[draw=yellow!60,fill=yellow!40,rounded corners] (6.1,  1.6) rectangle (8.9, 2.4);
        \node at (7.5, 0) {Decide for Commit/Abort};
        \node at (7.5, 1) {Decide for Commit/Abort};
        \node at (7.5, 2) {Decide for Commit/Abort};

        \node[label] at (7.5, 0) {Decide Outcome};
        \node[label] at (9.5, 0) {Inform};
    \end{tikzpicture}
    }
    \caption{The message flow of \oCB{} for a $3$-shard client request $\Request{\Transaction}{\Client}$ that is committed.}\label{fig:flow_ocb}
\end{figure}

Due to the similarity between \oCB{} and \cCB{}, it is straightforward to use the details of Theorem~\ref{thm:ccb} to prove that \oCB{} provides \emph{validity}, \emph{shard-involvement}, and \emph{shard-applicability}.   Next, we will focus on how \oCB{} provides \emph{cross-shard-consistency}. As a first step, we illustrate the ways in which the normal-case of \oCB{}  can fail (e.g., due to malicious behavior of clients, failing replicas, or unreliable communication).

\begin{example}\label{ex:ocb_failures}
Consider a transaction $\Transaction$ proposed by client $\Client$ and affecting shard $\Shard \in \Shards{\Transaction}$. First, we consider the case in which $\Primary{\Shard}$ is malicious and tries to set up a coordinated attack. To have maximum control over the steps of \oCB{}, the primary sends  the message $m(\Shard, \Transaction)_{\View, \rn}$ to only $\nf{\Shard''} - \f{\Shard''}$ good replicas in each shard $\Shard'' \in \Shards{\Transaction}$. By doing so, $\Primary{\Shard}$ can coordinate the faulty replicas in each shard to assure failure of any phase at any replica $\Replica' \in \Shard'$, $\Shard' \in \Transaction$: 
\begin{enumerate}
    \item To prevent $\Replica'$ from finishing the global preprepare phase (and start the global prepare phase) for an $M$ with $m(\Shard',\Transaction)_{\View',\rn'} \in M$, $\Primary{\Shard}$ simply does not send $m(\Shard, \Transaction)_{\View, \rn}$ to $\Replica'$.
    \item To prevent $\Replica'$ from finishing the global prepare phase (and start the global commit phase) for $M$, $\Primary{\Shard}$ instructs the faulty replicas in $\Faulty{\Shard}$ to not send prepare messages for $M$ to $\Replica'$. Hence, $\Replica'$ will receive at-most $\nf{\Shard} - \f{\Shard}$ prepare messages for $M$ from replicas in $\Shard$, assuring that it will not receive a local prepare certificate $P(\Shard)$ and will not finish the global prepare phase for $M$.
\item Likewise, to prevent $\Replica'$ from finishing the global commit phase (and start execution) for $M$, $\Primary{\Shard}$ instructs the faulty replicas in $\Faulty{\Shard}$ to not send commit messages to $\Replica'$.  Hence, $\Replica'$ will receive at-most $\nf{\Shard} - \f{\Shard}$ commit messages for $M$ from replicas in $\Shard$, assuring that it will not receive a local commit certificate $C(\Shard)$ and will not finish the global commit phase for $M$.
\end{enumerate}

None of the above attacks can be attributed to faulty behavior of $\Primary{\Shard}$. First, unreliable communication can result in the same outcomes for $\Replica'$. Furthermore, even if communication is reliable and $\Primary{\Shard}$ is good, we can see the same outcomes:
\begin{enumerate}
\item The client $\Client$ can be malicious and not send $\Transaction$ to $\Shard$. At the same time, all other primaries $\Primary{\Shard''}$ of shards $\Shard'' \in \Shards{\Transaction}$ can be malicious and not send anything to $\Shard$ either. In this case, $\Primary{\Shard}$ will never be able to send any message $m(\Shard, \Transaction)_{\View, \rn}$ to $\Replica'$, as no replica in $\Shard$ is aware of $\Transaction$.
\item If any primary $\Primary{\Shard''}$ of $\Shard'' \in \Shards{\Transaction}$ is malicious, then it can prevent some replicas in $\Shard$ from starting the global prepare phase, thereby preventing these replicas to send prepare messages to $\Replica'$. If $\Primary{\Shard''}$ prevents sufficient replicas in $\Shard$ from starting the global prepare phase, $\Replica'$ will be unable to finish the global prepare phase.
\item Likewise, any malicious primary  $\Primary{\Shard''}$ of $\Shard'' \in \Shards{\Transaction}$  can prevent replicas in $\Shard$ from starting the global commit phase, thereby assuring that $\Replica'$ will be unable to finish the global commit phase.
\end{enumerate}
\end{example}

To deal with malicious behavior, \oCB{} needs a robust recovery mechanism. We cannot simply build that mechanism on top of traditional view-change approaches: these traditional view-change approaches require that one can identify a single source of failure (when communication is reliable), namely the current primary. As Example~\ref{ex:ocb_failures} already showed, this property does not hold for \oCB{}. To remedy this, the recovery mechanisms of \oCB{} has components that perform \emph{local view-change} and that perform \emph{global state recovery}. The pseudo-code for this recovery protocol can be found in Figure~\ref{fig:recovery_main}. Next, we describe the working of this recovery protocol in detail. Let $\Replica \in \Shard$ be a replica that determines that it cannot finish a round $\rn$ of view $\View$.

First, $\Replica$ determines whether it already has a \emph{guarantee} on which transaction it has to execute in round $\rn$. This is the case when the following conditions are met: $\Replica$ finished the global prepare phase for $M$ with $m(\Shard, \Transaction)_{\View,\rn} \in M$ and has received a local commit certificate $C(\Shard'')$ for $M$ from some shard $\Shard'' \in \Shards{\Transaction}$. In this case, $\Replica$ can simply request all missing local commit certificates directly, as $C(\Shard'')$ can be used to prove to any involved replica $\Replica' \in \Shard'$, $\Shard' \in \Shards{\Transaction}$, that $\Replica'$ also needs to commit to $M$. To request such missing commit certificates of $\Shard'$, replica $\Replica$ sends out \MName{VCGlobalSCR} messages to all replicas in $\Shard'$ (\lfref{fig:recovery_main}{global_send}). Any replica $\Replica'$ that receives such a \Name{VCGlobalSCR} message can use the information in that message to reach the global commit phase for $M$ and, hence, provide $\Replica$ with the requested commit messages (\lfref{fig:recovery_main}{reply}).

\begin{figure}[t!]
\begin{myprotocol}
    \EVENT{$\Replica \in \Shard$ is unable to finish round $\rn$ of view $\View$}\label{fig:recovery_main:detect_event}
        \IF{$\Replica$ finished in round $\rn$ the global prepare phase for $M$,
                \\\qquad but is unable to finish the global commit phase}
        \STATE Let $P$ be the global prepare certificate of $\Replica$ for $M$.
        \IF{$\Replica$ has a local commit certificate $C(\Shard'')$ for $M$}\label{fig:recovery_main:force_commit}
            \FOR{$\Shard' \in \Shards{\Transaction}$}
                \IF{$\Replica$ did not yet receive a local commit certificate $C(\Shard')$}
                    \STATE Broadcast $\Message{VCGlobalSCR}{M, P, C(\Shard'')}$ to all replicas in $\Shard'$.\label{fig:recovery_main:global_send}
                \ENDIF
            \ENDFOR
        \ELSE
            \STATE Detect the need for local state recovery of round $\rn$ of view $\View$ (Figure~\ref{fig:local_recovery}).
        \ENDIF
    \ELSE
        \STATE Detect the need for local state recovery of round $\rn$ of view $\View$ (Figure~\ref{fig:local_recovery}).
    \ENDIF
    \STATE (Eventually repeat this event if $\Replica$ remains unable to finish round $\rn$.)
    \ENDEVENT
    \SPACE
    \EVENT{$\Replica' \in \Shard'$ receives a message $\Message{VCGlobalSCR}{M, P, C(\Shard'')}$ from $\Replica \in \Shard$}\label{fig:recovery_main:reply}
        \IF{$\Replica'$ did not reach the global commit phase for $M$}
            \STATE Use $M$, $P$, and $C(\Shard'')$ to reach the global commit phase for $M$.
        \ELSE
            \STATE Send a commit message for $M$ to $\Replica$.
        \ENDIF
    \ENDEVENT
\end{myprotocol}
\caption{The view-change \emph{global short-cut recovery path} that determines whether $\Replica$ already has the assurance that the current transaction will be committed. If this is the case, then $\Replica$ requests only the missing information to proceed with execution. Otherwise, $\Replica$ requires at-least local recovery (Figure~\ref{fig:local_recovery}).}\label{fig:recovery_main}
\end{figure}

If $\Replica$ does not have a \emph{guarantee} itself on which transaction it has to execute in round $\rn$, then it needs to determine whether any other replica (either in its own shard or in any other shard) has already received and acted upon such a guarantee. To initiate such local and global state recovery, $\Replica$ simply detects the current view as faulty. To do so, $\Replica$ broadcasts a \MName{VCRecoveryRQ} message to all other replicas in $\Shard$ that contains all information $\Replica$ collected on round $\rn$ in view $\View$ (\lfref{fig:local_recovery}{request}). Other replicas $\Replica[q] \in \Shard$ that already have \emph{guarantees} for round $\rn$ can help $\Replica$ by providing all missing information (\lfref{fig:local_recovery}{shortcut}). On receipt of this information, $\Replica$ can proceed with the round (\lfref{fig:local_recovery}{shortcut_reply}). If no replicas can provide the missing information, then eventually all good replicas will detect the need for local recovery, this either by themselves (\lfref{fig:local_recovery}{owndetect}) or after receiving \MName{VCRecoveryRQ} messages of at-least $\f{\Shard}+1$ distinct replicas in $\Shard$, of which at-least a single replica must be good (\lfref{fig:local_recovery}{multi}).

Finally, if a replica $\Replica$ receives $\nf{\Shard}$ \MName{VCRecoveryRQ} messages, then it has the guarantee that at least $\nf{\Shard}-\f{\Shard} \geq \f{\Shard}+1$ of these messages come from good replicas in $\Shard$. Hence, due to \lfref{fig:local_recovery}{multi}, all $\nf{\Shard}$ good replicas in $\Shard$ will send \MName{VCRecoveryRQ}, and, when communication is reliable, also receive these messages. Consequently, at this point, $\Replica$ can start the new view by electing a new primary and awaiting the \MName{NewView} proposal of this new primary (\lfref{fig:local_recovery}{new_view}). If $\Replica$ is the new primary, then it starts the new view by proposing a \MName{NewView}. As other shards \emph{could} have already made final decisions depending on local prepare or commit certificates of $\Shard$ for round $\rn$, we need to assure that such certificates are not invalidated. To figure out whether such final decisions have been made, the new primary will query other shards $\Shard'$ for their state whenever the \MName{NewView} message contains global preprepare certificates for transactions $\Transaction$, $\Shard' \in \Shards{\Transaction}$, but not a local commit certificate to \emph{guarantee} execution of $\Transaction$ (\lfref{fig:local_recovery}{new_global_view}).

\begin{figure}[t!]
\begin{myprotocol}
    \EVENT{$\Replica \in \Shard$ detects the need for local state recovery of round $\rn$ of view $\View$}\label{fig:local_recovery:owndetect}
        \STATE Let $M$ be the latest global preprepare certificate accepted for round $\rn$ by $\Replica$ (if any).\label{fig:local_recovery:provide}
        \STATE Let $S$ be $M$ and any prepare and commit certificates for $M$ collected by $\Replica$.
        \STATE Broadcast $\Message{VCRecoveryRQ}{\View,\rn,S}$.\label{fig:local_recovery:request}
    \ENDEVENT
    \SPACE
    \EVENT{$\Replica[q] \in \Shard$ receives messages $\Message{VCRecoveryRQ}{\View,\rn,S}$ of $\Replica \in \Shard$ and $\Replica[q]$ has
        \begin{enumerate}
            \item started the global prepare phase for $M$ with $m(\Shard, \Transaction)_{\View[w], \rn} \in M$;
            \item a global prepare certificate for $M$;
            \item a local commit certificate $C(\Shard'')$ for $M$
        \end{enumerate}}
        \STATE Send $\Message{VCLocalSCR}{M, P, C(\Shard'')}$ to $\Replica \in \Shard$.\label{fig:local_recovery:shortcut}
    \ENDEVENT
    \SPACE
    \EVENT{$\Replica \in \Shard$ receives a message $\Message{VCLocalSCR}{M, P, C(\Shard'')}$ from $\Replica[q] \in \Shard$}\label{fig:local_recovery:shortcut_reply}
        \IF{$\Replica$ did not reach the global commit phase for $M$}
            \STATE Use $M$, $P$, and $C$ to reach the global commit phase for $M$.
        \ENDIF
    \ENDEVENT
    \SPACE
\EVENT{$\Replica \in \Shard$ receives messages $\Message{VCRecoveryRQ}{\View_i,\rn,S_i}$, $1 \leq i \leq \f{\Shard}+1$,
            \\\qquad from distinct replicas in $\Shard$}\label{fig:local_recovery:multi}
        \STATE $\Replica$ detects the need for local state recovery of round $\rn$ of view $\min\{\View_i \mid 1 \leq i \leq \f{\Shard}+1 \}$.
    \ENDEVENT
    \SPACE
    \EVENT{$\Replica \in \Shard$ receives messages $\Message{VCRecoveryRQ}{\View,\rn,S_i}$, $1 \leq i \leq \nf{\Shard}$,
            \\\qquad from distinct replicas in $\Shard$}\label{fig:local_recovery:new_view}
        \IF{$\ID{\Replica} \neq (\View + 1) \bmod \n{\Shard}$}
            \STATE ($\Replica$ awaits the \MName{NewView} message of the new primary, \lfref{fig:new_view}{process}.)
        \ELSE
            \STATE Broadcast $\Message{NewView}{\Message{VCRecoveryRQ}{\View,\rn,S_i} \mid 1 \leq i \leq \nf{\Shard}}$ to all replicas in $\Shard$.\label{fg:local_recovery:new_view}
            \IF{there exists a $S_i$ that contains global preprepare certificate $M$,\\\qquad but no $S_j$ contains a local commit certificate for $M$}\label{fig:local_recovery:new_global_view}
                \STATE $\Replica$ initiates global state recovery of round $\rn$ (\lfref{fig:new_view}{global}).
            \ENDIF
        \ENDIF
    \ENDEVENT
    \end{myprotocol}
    \caption{The view-change \emph{local short-cut recovery path} that determines whether some $\Replica[q]$ can provide $\Replica$ with the assurance that the current transaction will be committed. If this is the case, then $\Replica$ only needs this assurance, otherwise $\Shard$ requires a new view (Figure~\ref{fig:new_view}).}\label{fig:local_recovery}
\end{figure}

The new-view process has three stages. First, the new primary $\Replica[p]$ proposes the new-view via a \MName{NewView} message (\lfref{fig:local_recovery}{new_view}). If necessary, the new primary $\Replica[p]$ also requests the relevant global state from any relevant shard (\lfref{fig:new_view}{global}). The replicas in other shards will respond to this request with their local state (\lfref{fig:new_view}{other_state}). The new primary collects these responses and sends them to all replicas in $\Shard$ via a \MName{NewViewGlobal} message.

Then, after $\Replica[p]$ sends the \MName{NewView} message to $\Replica \in \Shard$, $\Replica$ determines whether the \MName{NewView} message contains sufficient information to recover round $\rn$ (\lfref{fig:new_view}{recover}), contains sufficient information to wait for any relevant global state (\lfref{fig:new_view}{recover_global_wait}), or to determine that the new primary must propose for round $\rn$ (\lfref{fig:new_view}{replace}). If $\Replica$ determines it needs to wait for any relevant global state, then $\Replica$ will wait for this state to arrive via a \MName{NewViewGlobal} message. Based on the received global state, $\Replica$ determines to  recover round $\rn$ (\lfref{fig:new_view}{global_recover}), or determines that the new primary must propose for round $\rn$ (\lfref{fig:new_view}{global_replace}).

\begin{figure}[t!]
\begin{myprotocol}
    \EVENT{$\Replica[p] \in \Shard$ initiates global state recovery of round $\rn$ using $\Message{NewView}{V}$}\label{fig:new_view:global}
        \STATE Let $T$ be the transactions with global preprepare certificates for round $\rn$ of $\Shard$ in $V$.
        \STATE Let $S$ be the shards affected by transactions in $T$.
        \STATE Broadcast $\Message{VCGlobalStateRQ}{\View, \rn, V}$ to all replicas in $\Shard' \in S$.
        \FOR{$\Shard' \in S$}
            \STATE Wait for \MName{VCGlobalStateRQ} messages for $V$ from $\nf{\Shard'}$ distinct replicas in $\Shard'$.
            \STATE Let $W(\Shard')$ be the set of received \Name{VCGlobalStateRQ} messages.
        \ENDFOR
        \STATE Broadcast $\Message{NewViewGlobal}{V, \{ W(\Shard') \mid \Shard' \in S \}}$ to all replicas in $\Shard$.
    \ENDEVENT
    \SPACE
    \EVENT{$\Replica' \in \Shard'$ receives message $\Message{VCGlobalStateRQ}{\View, \rn, V}$ from $\Replica[p] \in \Shard$}\label{fig:new_view:other_state}
        \IF{$\Replica'$ has a global preprepare certificate $M$ with $m(\Shard, \Transaction)_{\View[w], \rn} \in M$
            \\\qquad and reached the global commit phase for $M$}\label{fig:new_view:global_commit_state}
            \STATE Let $P$ be the global prepare certificate for $M$.
            \STATE Send $\Message{VCGlobalStateR}{\View, \rn, V, M, P}$ to $\Replica[p]$.
        \ELSE
            \STATE Send $\Message{VCGlobalStateR}{\View, \rn, V}$ to $\Replica[p]$.
        \ENDIF
    \ENDEVENT
    \SPACE
    \EVENT{$\Replica \in \Shard$ receives a valid $\Message{NewView}{V}$ message from replica $\Replica[p]$}\label{fig:new_view:process}
        \IF{there exists a $\Message{VCRecoveryRQ}{\View_i,\rn,S_i} \in V$ that contains
            \\\qquad a global preprepare certificate $M$ with $m(\Shard, \Transaction)_{\View[w],\rn} \in M$,
            \\\qquad a global prepare certificate $P$ for $M$, and a local commit certificate $C(\Shard'')$ for $M$}\label{fig:new_view:recover}
            \STATE Use $M$, $P$, and $C$ to reach the global commit phase for $M$.\label{fig:new_view:commit}
        \ELSIF{there exists a $\Message{VCRecoveryRQ}{\View_i,\rn,S_i} \in V$ that contains
            \\\qquad a global preprepare certificate $M$,\\\qquad but no $\Message{VCRecoveryRQ}{\View_j,\rn,S_j} \in V$ contains a local commit certificate for $M$}\label{fig:new_view:recover_global_wait}
            \STATE $\Replica$ detects the need for global state recovery of round $\rn$ (\lfref{fig:new_view}{global_view}).
        \ELSE
            \STATE  ($\Replica[p]$ must propose for round $\rn$.)\label{fig:new_view:replace}
        \ENDIF
    \ENDEVENT
    \SPACE
    \EVENT{$\Replica \in \Shard$ receives a valid $\Message{NewViewGlobal}{V, W}$ from $\Replica[p] \in \Shard$}\label{fig:new_view:global_view}
        \IF{any message in $W$ is of the form $\Message{VCGlobalStateR}{\View, \rn, V, M, P}$}\label{fig:new_view:global_recover}
            \STATE Select $\Message{VCGlobalStateR}{\View, \rn, V, M, P} \in W$ with latest view $\View[w]$, $m(\Shard, \Transaction)_{\View[w],\rn} \in M$.\label{fig:new_view:global_high_select}
            \STATE Use $M$ and $P$ to reach the global commit phase for $M$.\label{fig:new_view:global_commit}
        \ELSE\label{fig:new_view:global_replace}
            \STATE ($\Replica[p]$ must propose for round $\rn$.)
        \ENDIF
    \ENDEVENT
    \end{myprotocol}
\caption{The view-change \emph{new-view recovery path} that recovers the state of the previous view based on a \MName{NewView} proposal of the new primary. As part of the new-view recovery path, the new primary can construct a global new-view that contains the necessary information from other shards to reconstruct the local state.}\label{fig:new_view}
\end{figure}

Next, we shall prove the correctness of the view-change protocol outlined in Figures~\ref{fig:recovery_main},~\ref{fig:local_recovery}, and~\ref{fig:new_view}. First, using a standard quorum argument, we prove that in a single round of a single view of $\Shard$, only a single global preprepare message affecting $\Shard$ can get committed by any other affected shards:

\begin{lemma}\label{lem:non_divergence}
Let $\Transaction_1$ and $\Transaction_2$ be transactions with $\Shard \in (\Shards{\Transaction_1} \intersect \Shards{\Transaction_2})$. If $\nf{\Shard} > 2\f{\Shard}$ and there exists shards $\Shard_i \in \Shards{\Transaction_i}$, $i \in \{1,2\}$, such that good replicas $\Replica_i \in \NonFaulty{\Shard_i}$ reached the global commit phase for global preprepare certificate $M_i$ with $m(\Shard, \Transaction_i)_{\View,\rn} \in M_i$, then $\Transaction_1 = \Transaction_2$.
\end{lemma}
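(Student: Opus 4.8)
The plan is to use a standard Byzantine quorum-intersection argument applied \emph{within the single shard $\Shard$}, in the single round $\rn$ of the single view $\View$ named in the statement. The key observation is that the hypotheses assert, for each $i \in \{1,2\}$, that some good replica $\Replica_i$ in $\Shard_i$ reached the global commit phase for a global preprepare certificate $M_i$ containing $m(\Shard, \Transaction_i)_{\View,\rn}$. By the definition of the global commit phase, reaching it for $M_i$ requires, for \emph{every} shard in $\Shards{\Transaction_i}$ --- and in particular for $\Shard$ itself, since $\Shard \in \Shards{\Transaction_i}$ --- a local commit certificate $C(\Shard)$, which is a set of commit messages for $M_i$ from $\nf{\Shard}$ distinct replicas in $\Shard$. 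I would actually prefer to work at the global \emph{prepare} level, since a local commit certificate presupposes a local prepare certificate $P(\Shard)$ consisting of prepare messages for $M_i$ from $\nf{\Shard}$ distinct replicas in $\Shard$; so each $\Transaction_i$ yields a set $Q_i$ of $\nf{\Shard}$ distinct replicas in $\Shard$ that each sent a prepare message endorsing a proposal $m(\Shard,\Transaction_i)_{\View,\rn}$ for round $\rn$ of view $\View$.

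The core step is then the quorum intersection. I would argue that $\abs{Q_1 \intersect Q_2} \geq \nf{\Shard} + \nf{\Shard} - \n{\Shard}$, and since $\n{\Shard} = \nf{\Shard} + \f{\Shard}$ this simplifies to $\abs{Q_1 \intersect Q_2} \geq \nf{\Shard} - \f{\Shard}$. Using the hypothesis $\nf{\Shard} > 2\f{\Shard}$, we get $\nf{\Shard} - \f{\Shard} > \f{\Shard} \geq 0$, so the intersection is nonempty and in fact contains more than $\f{\Shard}$ replicas; hence at least one replica in $Q_1 \intersect Q_2$ is good. Let $\Replica^{\ast}$ be such a good replica in the intersection. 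Then $\Replica^{\ast}$ sent a prepare message endorsing a round-$\rn$, view-$\View$ proposal for $\Transaction_1$ and also one for $\Transaction_2$.

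The final step is to invoke the behavior of good replicas in the underlying consensus protocol: a good replica prepares (endorses) at most one proposal per round per view. This is precisely the non-divergence guarantee inherited from the \Pbft{}-style consensus underlying \oCB{} (captured abstractly by the consensus guarantees of Definition~\ref{def:consensus}, which force agreement among good replicas on the $\rn$-th decision within a view). Since $\Replica^{\ast}$ is good and endorsed round-$\rn$ view-$\View$ proposals for both transactions, those proposals must coincide; that is, $m(\Shard,\Transaction_1)_{\View,\rn} = m(\Shard,\Transaction_2)_{\View,\rn}$, which by the form of the proposal message forces $\Request{\Transaction_1}{\Client} = \Request{\Transaction_2}{\Client}$ and therefore $\Transaction_1 = \Transaction_2$.

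I expect the main obstacle to be bookkeeping rather than conceptual: I must make sure the quorums $Q_1$ and $Q_2$ are genuinely drawn from $\Shard$ (not from $\Shard_1$ or $\Shard_2$), which is exactly where the hypothesis $\Shard \in \Shards{\Transaction_1} \intersect \Shards{\Transaction_2}$ is used --- it guarantees that reaching the global commit phase for $M_i$ \emph{requires} a local certificate from $\Shard$, so the quorum lives in the shard whose fault bound $\nf{\Shard} > 2\f{\Shard}$ we are allowed to use. A secondary subtlety is justifying that a single good replica in $\Shard$ endorses only one proposal per $(\View,\rn)$ pair; I would state this explicitly as the standard single-vote property of the good replicas' local consensus, so that the contradiction is airtight. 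I should also confirm that it suffices to use prepare certificates (rather than commit certificates) for the intersection, since the existence of a local commit certificate implies the existence of the corresponding local prepare certificate for the same $M_i$ from $\nf{\Shard}$ replicas of $\Shard$.
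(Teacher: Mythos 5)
Your proposal is correct and is essentially the paper's own argument: both hinge on the fact that reaching the global commit phase for $M_i$ requires a local prepare certificate from $\nf{\Shard}$ distinct replicas of $\Shard$ for round $\rn$ of view $\View$, followed by a standard quorum-intersection count using $\nf{\Shard} > 2\f{\Shard}$ and the single-vote property of good replicas. The only cosmetic difference is that the paper first strips out the faulty replicas and bounds the union of the two good subsets by $\nf{\Shard}$, whereas you intersect the full quorums inside $\n{\Shard}$ and observe the intersection exceeds $\f{\Shard}$; these are interchangeable.
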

\begin{proof}
We prove this property using contradiction. We assume $\Transaction_1 \neq \Transaction_2$. Let $P_i(\Shard)$ be the local prepare certificate provided by $\Shard$ for $M_i$ and used by $\Replica_i$ to reach the global commit phase, let $S_i \subseteq \Shard$ be the $\nf{\Shard}$ replicas in $\Shard$ that provided the prepare messages in $P_i(\Shard)$, and let $T_i = S_i \difference \Faulty{\Shard}$ be the good replicas in $S_i$. By construction, we have $\abs{T_i} \geq \nf{\Shard} - \f{\Shard}$. As all replicas in $T_1 \union T_2$ are good, they will only send out a single prepare message per round $\rn$ of view $\View$. Hence, if $\Transaction_1 \neq \Transaction_2$, then $T_1 \intersect T_2 = \emptyset$, and we must have $2(\nf{\Shard} - \f{\Shard}) \leq \abs{T_1 \union T_2}$. As all replicas in $T_1 \union T_2$ are good, we also have $\abs{T_1 \union T_2} \leq \nf{\Shard}$. Hence, $2(\nf{\Shard} - \f{\Shard}) \leq \nf{\Shard}$, which simplifies to $\nf{\Shard} \leq 2\f{\Shard}$, a contradiction. Hence, we conclude $\Transaction_1 = \Transaction_2$.
\end{proof}

Next, we use Lemma~\ref{lem:non_divergence} to prove that any global preprepare certificate that \emph{could} have been accepted by any good affected replica is preserved by \oCB{}:

\begin{proposition}\label{prop:view_change_o}
Let $\Transaction$ be a transaction and $m(\Shard, \Transaction)_{\View,\rn}$ be a preprepare message. If, for all shards $\Shard^{\ast}$, $\nf{\Shard^{\ast}} > 2\f{\Shard^{\ast}}$, and there exists a shard $\Shard' \in \Shards{\Transaction}$ such that $\nf{\Shard'} - \f{\Shard'}$ good replicas in $\Shard'$ reached the global commit phase for $M$ with $m(\Shard, \Transaction)_{\View,\rn} \in M$, then every successful future view of $\Shard$ will recover $M$ and assure that the good replicas in $\Shard$ reach the commit phase for $M$.
\end{proposition}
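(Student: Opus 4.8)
The plan is to argue by induction on the view number of $\Shard$, using quorum intersection to carry the certificate $M$ across each individual view change, in the same style as Lemma~\ref{lem:non_divergence}. The crucial anchor is that the hypothesis already forces a local prepare certificate for $M$ to exist at $\Shard$ itself: each of the $\nf{\Shard'} - \f{\Shard'}$ good replicas in $\Shard'$ that reached the global commit phase holds a global prepare certificate for $M$, and such a certificate contains a local prepare certificate $P(\Shard)$ built from $\nf{\Shard}$ prepare messages of $\Shard$. Since a replica of $\Shard$ only sends a prepare message for $M$ after accepting the global preprepare certificate $M$, I would first conclude that a set $G_M \subseteq \NonFaulty{\Shard}$ with $\abs{G_M} \geq \nf{\Shard} - \f{\Shard}$ of good replicas in $\Shard$ accepted $M$ and will report it as their latest global preprepare certificate for round $\rn$ in any \MName{VCRecoveryRQ} they send (\lfref{fig:local_recovery}{provide}).

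For the single-step claim, consider a successful view change to a view $\View[u] > \View$. Such a view only starts once some replica gathers $\nf{\Shard}$ \MName{VCRecoveryRQ} messages (\lfref{fig:local_recovery}{new_view}), at least $\nf{\Shard} - \f{\Shard}$ of which come from good replicas. As $\abs{G_M} \geq \nf{\Shard} - \f{\Shard}$ and both sets live inside $\NonFaulty{\Shard}$, the bound $\nf{\Shard} > 2\f{\Shard}$ gives an intersection of size at least $\nf{\Shard} - 2\f{\Shard} > 0$; hence at least one member of $G_M$ contributes its state, so $M$ occurs inside the \MName{NewView} set $V$. I would then split on whether a local commit certificate for $M$ is already visible. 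If some \MName{VCRecoveryRQ} in $V$ carries $M$ together with a global prepare certificate $P$ and a local commit certificate $C(\Shard'')$, every good replica recovers $M$ immediately and reaches the commit phase (\lfref{fig:new_view}{recover}, \lfref{fig:new_view}{commit}). Otherwise $M$ is present without any local commit certificate, which triggers global state recovery (\lfref{fig:new_view}{recover_global_wait}).

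The harder branch is this global state recovery, where I would again invoke the hypothesis on $\Shard'$. The new primary broadcasts \MName{VCGlobalStateRQ} to the shards affected by the round-$\rn$ certificates (\lfref{fig:new_view}{global}) and, for $\Shard'$, waits for $\nf{\Shard'}$ responses. Repeating the quorum-intersection count at $\Shard'$---the $\nf{\Shard'} - \f{\Shard'}$ good replicas that reached the global commit phase against the good responders, of which there are at least $\nf{\Shard'} - \f{\Shard'}$---shows that at least one good replica of $\Shard'$ answers with a \MName{VCGlobalStateR} message carrying $M$ and its global prepare certificate $P$ (\lfref{fig:new_view}{global_commit_state}). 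Consequently the \MName{NewViewGlobal} set $W$ contains such a response, and every good replica of $\Shard$ selects it and reaches the global commit phase for $M$ (\lfref{fig:new_view}{global_recover}--\lfref{fig:new_view}{global_commit}), completing the single-step claim.

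Finally, I would assemble these steps into the induction: the base case is the first successful view after $\View$, and in the inductive step I use that a view which recovered $M$ leaves its good replicas holding $M$ (with a prepare certificate), so they re-anchor the set $G_M$ for the next view change. The main obstacle I anticipate is consistency across several competing certificates for round $\rn$: a later view might have produced a global preprepare certificate $M'$ for a different transaction before $M$ was recovered. To rule this out I would lean on Lemma~\ref{lem:non_divergence}, which already forbids two distinct transactions of round $\rn$ from both reaching the global commit phase within one view, together with the latest-view selection rule (\lfref{fig:new_view}{global_high_select}); combined with the induction hypothesis that every earlier successful view preserved $M$, this forces the selected certificate to carry the same transaction $\Transaction$ as $M$, so that no successful future view of $\Shard$ can commit a transaction other than $\Transaction$ in round $\rn$.
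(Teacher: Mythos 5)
Your proposal is correct and follows essentially the same route as the paper's proof: anchoring on the set of at least $\nf{\Shard}-\f{\Shard}$ good replicas of $\Shard$ whose prepare messages appear in the local prepare certificate $P(\Shard)$ embedded in the global prepare certificates held at $\Shard'$, intersecting that set with the contributors to the \MName{NewView} message, splitting on whether a local commit certificate is already visible, running a second quorum intersection at $\Shard'$ for the \MName{NewViewGlobal} responses, and inducting over views while using Lemma~\ref{lem:non_divergence} and the latest-view selection rule to dispose of competing round-$\rn$ certificates. The only substantive difference is presentational: the paper anchors the induction at the \emph{first} view whose certificate satisfies the premise and carries a second explicit invariant---that no local prepare certificate of $\Shard$ for a different certificate can be formed in any later view---which is the formalization of the consistency argument you sketch informally in your closing paragraph.
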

\begin{proof}
Let $\View^{\ast} \leq \View$ be the first view in which a global prepare certificate $M^{\ast}$ with $m(\Shard, \Transaction^{\ast})_{\View^{\ast},\rn} \in M^{\ast}$ satisfied the premise of this proposition. Using induction on the number of views after the first view $\View^{\ast}$, we will prove the following two properties on $M^{\ast}$:
\begin{enumerate}
    \item every good replica that participates in view $\View[w]$, $\View^{\ast} < \View[w]$, will recover $M^{\ast}$ upon entering view $\View[w]$ and reach the commit phase for $M^{\ast}$; and
    \item no replica will be able to construct a local prepare certificate of $\Shard$ for any global preprepare certificate $M^{\dag} \neq M^{\ast}$ with $m(\Shard, \Transaction^{\dag})_{\View[w], \rn} \in M^{\dag}$, $\View^{\ast} < \View[w]$.
\end{enumerate}
The base case is view $\View^{\ast} + 1$. Let $S' \subseteq \NonFaulty{\Shard'}$ be the set of $\nf{\Shard'} - \f{\Shard'}$ good replicas in $\Shard'$ that reached the global commit phase for $M^{\ast}$. Each replica $\Replica' \in S'$ has a local prepare certificate $P(\Shard)$ consisting of $\nf{\Shard}$ prepare messages for $M^{\ast}$ provided by replicas in $\Shard$. We write $S(\Replica') \subseteq \NonFaulty{\Shard}$ to denote the at-least $\nf{\Shard} - \f{\Shard}$ good replicas in $\Shard$ that provided such a prepare message to $\Replica'$.

Consider any valid new-view proposal $\Message{NewView}{V}$ for view $\View^{\ast}+1$. If the conditions of \lfref{fig:new_view}{recover} hold for some global preprepare certificate $M^{\dag}$  with $m(\Shard, \Transaction^{\ddag})_{\View[w], \rn} \in M^{\ddag}$, then we recover $M^{\ddag}$. As there is a local commit certificate for $M^{\ddag}$ in this case, the premise of this proposition holds on $M^{\ddag}$. As $\View^{\ast}$ is the first view in which the premise of this proposition hold, we can use Lemma~\ref{lem:non_divergence} to conclude that $\View[w] = \View^{\ast}$, $M^{\ddag} = M^{\ast}$, and, hence, that the base case holds if the conditions of \lfref{fig:new_view}{recover} hold.  Next, we assume that the conditions of \lfref{fig:new_view}{recover} do not hold, in which case $M^{\ast}$ can only be recovered via global state recovery. As the first step in global state recovery is proving that the condition of \lfref{fig:new_view}{recover_global_wait} holds. Let $T \subseteq \NonFaulty{\Shard}$ be the set of at-least $\nf{\Shard} - \f{\Shard}$ good replicas in $\Shard$ whose \MName{VCRecoveryRQ} message is in $V$ and let $\Replica' \in S'$. We have $\abs{S(\Replica')} \geq \nf{\Shard} -\f{\Shard}$ and $\abs{T} \geq \nf{\Shard} - \f{\Shard}$. Hence, by a standard quorum argument, we conclude $S(\Replica') \intersect T \neq \emptyset$. Let $\Replica[q] \in (S(\Replica') \intersect T)$. As $\Replica[q]$ is good and send prepare messages for $M^{\ast}$, it must have reached the global prepare phase for $M^{\ast}$. Consequently, the condition of \lfref{fig:new_view}{recover_global_wait} holds and to complete the proof, we only need to prove that any well-formed \MName{NewViewGlobal} message will recover $M^{\ast}$.

Let $\Message{NewViewGlobal}{V, W}$ be any valid global new-view proposal for view $\View^{\ast}+1$. As $\Replica[q]$ reached the global prepare phase for $M^{\ast}$, any valid global new-view proposal must include messages from $\Shard' \in \Shards{\Transaction}$. Let $U' \subseteq \Shard'$ be the replicas in $\Shard'$ of whom messages \MName{VCGlobalStateR} are included in $W$. Let $V' = U' \difference \Faulty{\Shard'}$. We have $\abs{S'} \geq \nf{\Shard'} -\f{\Shard'}$ and $\abs{V'} \geq \nf{\Shard'} - \f{\Shard'}$. Hence, by a standard quorum argument, we conclude $S' \intersect V' \neq \emptyset'$. Let $\Replica[q]' \in (S' \intersect V')$. As $\Replica[q]'$ reached the global commit phase for $M^{\ast}$, it will meet the conditions of \lfref{fig:new_view}{global_commit} and provide both $M^{\ast}$ and a global prepare certificate for $M^{\ast}$. Let $M^{\ddag}$ be any other global preprepare certificate in $W$ accompanied by a global prepare certificate. Due to \lfref{fig:new_view}{global_high_select}, the global preprepare certificate for the newest view of $\Shard$ will be recovered. As $\View^{\ast}$ is the newest view of $\Shard$, $M^{\ddag}$ will only prevent recovery of $M^{\ast}$ if it is also a global preprepare certificate for view $\View^{\ast}$ of $\Shard$. In this case, Lemma~\ref{lem:non_divergence} guarantees that $M^{\ddag} = M^{\ast}$. Hence, any replica $\Replica$ will recover $M^{\ast}$ upon receiving $\Message{NewViewGlobal}{V, W}$.

Now assume that the induction hypothesis holds for all views $\View[j]$, $\View^{\ast} < \View[j] \leq \View[i]$. We will prove that the induction hypothesis holds for view $\View[i] + 1$. Consider any valid new-view proposal $\Message{NewView}{V}$ for view $\View[i] + 1$ and let $M^{\ddag}$ with $m(\Shard, \Transaction^{\ddag})_{\View[w], \rn} \in M^{\ddag}$ be any global preprepare certificate that is recovered due to the new-view proposal $\Message{NewView}{V}$. Hence, $M^{\ddag}$ is recovered via either \lfref{fig:new_view}{commit} or \lfref{fig:new_view}{global_commit}. In both cases, there must exist a global prepare certificate $P$ for $M^{\ddag}$. As $\Message{NewView}{V}$ is valid, we must have $\View[w] \leq \View[i]$. Hence, we can apply the second property of the induction hypothesis to conclude that $\View[w] \leq \View^{\ast}$. If $\View[w] = \View^{\ast}$, then we can use Lemma~\ref{lem:non_divergence} to conclude that $M^{\ddag} = M^{\ast}$.  Hence, to complete the proof, we must show that $\View[w] = \View^{\ast}$. First, the case in which $M^{\ddag}$ is recovered via \lfref{fig:new_view}{commit}. Due to the existence of a global commit certificate $C$ for $M^{\ddag}$, $M^{\ddag}$ satisfies the premise of this proposition. By assumption, $\View^{\ast}$ is the first view for which the premise of this proposition holds. Hence, $\View[w] \geq \View^{\ast}$, in which case we conclude $M^{\ddag} = M^{\ast}$. Last, the case in which $M^{\ddag}$ is recovered via \lfref{fig:new_view}{global_commit}. In this case, $M^{\ddag}$ is recovered via some message $\Message{NewViewGlobal}{V, W}$. Analogous to the proof for the base case, $V$ will contain a message \MName{VCRecoveryRQ} from some replica $\Replica[q] \in S(\Replica')$. Due to \lfref{fig:local_recovery}{provide}, $\Replica[q]$ will provide information on $M^{\ast}$. Consequently, a prepare certificate for $M^{\ast}$ will be obtained via global state recovery, and we also conclude $M^{\ddag} = M^{\ast}$.
\end{proof}

Lemma~\ref{lem:non_divergence} and Proposition~\ref{prop:view_change_o} are technical properties that assures that no transaction that could-be-committed by any replica will ever get lost by the system. Next, we bootstrap these technical properties to prove that all good replicas can always recover such could-be-committed transactions.

\begin{proposition}\label{prop:recovery}
Let $\Transaction$ be a transaction and $m(\Shard, \Transaction)_{\View,\rn}$ be a preprepare message. If, for all shards $\Shard^{\ast}$, $\nf{\Shard^{\ast}} > 2\f{\Shard^{\ast}}$, and there exists a shard $\Shard' \in \Shards{\Transaction}$ such that $\nf{\Shard'} - \f{\Shard'}$ good replicas in $\Shard'$ reached the global commit phase for $M$ with $m(\Shard, \Transaction)_{\View,\rn} \in M$, then every good replica in $\Shard$ will accept $M$ whenever communication becomes reliable.
\end{proposition}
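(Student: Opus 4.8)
The plan is to read Proposition~\ref{prop:recovery} as the \emph{liveness} counterpart of the \emph{safety} statement already established in Proposition~\ref{prop:view_change_o}: that earlier proposition guarantees that \emph{every successful} future view of $\Shard$ recovers $M$ and drives its good replicas to the commit phase, so it suffices to show that, once communication becomes reliable, a successful view is in fact reached (and that the short-cut paths already settle the easy cases). I would therefore begin by fixing a point in time after which all messages between good replicas arrive within the known delay bound, and split on whether the good replicas in $\Shard$ are already able to finish round $\rn$. If every good replica has accepted $M$ we are done; otherwise at least one good replica $\Replica$ fires the detection event of \lfref{fig:recovery_main}{detect_event}.

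The second step handles the short-cut cases cheaply. If $\Replica$, or some peer $\Replica[q] \in \Shard$, already holds a local commit certificate for $M$, then \lfref{fig:recovery_main}{global_send} and \lfref{fig:local_recovery}{shortcut} deliver $\MName{VCGlobalSCR}$ and $\MName{VCLocalSCR}$ messages carrying enough certificates for $\Replica$ to reach the global commit phase for $M$ directly; under reliable communication these messages arrive, so $\Replica$ accepts $M$. The remaining case is that no good replica yet has such a guarantee, which forces the local-state-recovery path. Here I would use the amplification built into \lfref{fig:local_recovery}{multi}: as soon as $\f{\Shard}+1$ distinct $\MName{VCRecoveryRQ}$ messages reach any good replica, that replica itself detects the need for recovery and broadcasts its own request, so by a standard cascade all $\nf{\Shard} \ge \f{\Shard}+1$ good replicas eventually broadcast $\MName{VCRecoveryRQ}$ and, by reliability, each collects $\nf{\Shard}$ of them and advances the view at \lfref{fig:local_recovery}{new_view}.

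The third step is the view-advancement argument. Primaries rotate through $(\View+1)\bmod\n{\Shard}$ and at most $\f{\Shard}$ of them are faulty, so within a bounded number of view changes a good replica becomes primary; coupling view advancement with increasing timeouts (so that a good primary is eventually granted enough reliable-communication time to finish) guarantees that some view $\View[w] > \View$ with a good primary runs to completion. For such a view I would invoke Proposition~\ref{prop:view_change_o}: the good primary issues a well-formed $\MName{NewView}$ and, when the conditions of \lfref{fig:new_view}{recover_global_wait} hold, performs global state recovery by querying the other affected shards. The key point is that the premise supplies $\nf{\Shard'}-\f{\Shard'}$ good replicas of some $\Shard'\in\Shards{\Transaction}$ that reached global commit for $M$; since those shards are themselves live fault-tolerant clusters, the primary eventually gathers $\nf{\Shard'}$ responses, and a quorum-intersection argument identical to the base case of Proposition~\ref{prop:view_change_o} shows that at least one responder supplies $M$ together with a global prepare certificate, so the resulting $\MName{NewViewGlobal}$ recovers $M$. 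Proposition~\ref{prop:view_change_o} then forces every good replica in $\Shard$ to reach the commit phase for $M$, and hence to accept it.

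The main obstacle I anticipate is the liveness/termination argument of the third step rather than the recovery of $M$ itself, which is handed to us by Proposition~\ref{prop:view_change_o}. In particular I must rule out a faulty primary, or faulty primaries in other shards, stalling the new-view indefinitely, and I must ensure the cross-shard global-state-recovery query does not introduce a circular wait between shards. I would address the first with the timeout and bounded-faulty-primary rotation above, and the second by observing that global state recovery only \emph{reads} already-committed state from other shards (\lfref{fig:new_view}{other_state}) and never blocks on $\Shard$'s own progress, so responses from the good replicas of $\Shard'$---which exist by the premise and by $\nf{\Shard'}>2\f{\Shard'}$---are always eventually forthcoming once communication is reliable.
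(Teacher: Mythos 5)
Your proof follows essentially the same route as the paper's: the same three-case split (the replica itself already holds a local commit certificate and uses the global short-cut of \lfref{fig:recovery_main}{global_send}; some peer in $\Shard$ holds one and supplies it via the local short-cut of \lfref{fig:local_recovery}{shortcut}; neither holds one, so all good replicas cascade into a view-change whose successful recovery of $M$ is delegated to Proposition~\ref{prop:view_change_o}). The only difference is that you spell out the view-advancement liveness details (primary rotation, timeouts, and the cross-shard quorum intersection during global state recovery) that the paper leaves implicit in its appeal to Proposition~\ref{prop:view_change_o}.
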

\begin{proof}
Let $\Replica \in \Shard$ be a good replica that is unable to accept $M$. At some point, communication becomes reliable, after which $\Replica$ will eventually trigger \lfref{fig:recovery_main}{detect_event}. We have the following cases:
\begin{enumerate}
	\item If $\Replica$ meets the conditions of \lfref{fig:recovery_main}{force_commit}, then $\Replica$ has a local commit certificate $C(\Shard'')$, $\Shard'' \in \Shards{\Transaction}$. This local commit certificate certifies that at least $\nf{\Shard''} - \f{\Shard''}$ good replicas in $\Shard''$ finished the global prepare phase for $M$. Hence, the conditions for Proposition~\ref{prop:view_change_o} are met for $M$ and, hence, any shard in $\Shards{\Transaction}$ will maintain or recover $M$. Replica $\Replica$ can use $C(\Shard'')$ to prove this situation to other replicas, forcing them to commit to $M$, and provide any commit messages $\Replica$ is missing (\lfref{fig:recovery_main}{reply}).
	\item If $\Replica$ does not meet the conditions of \lfref{fig:recovery_main}{force_commit}, but some other good replica $\Replica[q] \in \Shard$ does, then $\Replica[q]$ can provide all missing information to $\Replica$ (\lfref{fig:local_recovery}{shortcut}). Next, $\Replica$ uses this information (\lfref{fig:local_recovery}{shortcut_reply}), after which it meets the conditions of \lfref{fig:recovery_main}{force_commit}.
    \item Otherwise, if the above two cases do not hold, then all $\nf{\Shard}$ good replicas in $\Shard$ are unable to finish the commit phase. Hence, they perform a view-change. Due to Proposition~\ref{prop:view_change_o}, this view-change will succeed and put every replica in $\Shard$ into the commit phase for $M$. As all good replicas in $\Shard$ are in the commit phase, each good replica in $\Shard$ will be able to make a local commit certificate $C(\Shard)$ for $M$, after which they meet the conditions of \lfref{fig:recovery_main}{force_commit}. \qedhere
\end{enumerate}
\end{proof}

Finally, we use Proposition~\ref{prop:recovery} to prove \emph{cross-shard-consistency}.

\begin{theorem}\label{thm:ocb_cross_cons}
Optimistic-\CB{} maintains cross-shard-consistency.
\end{theorem}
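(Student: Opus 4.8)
The plan is to show that cross-shard-consistency (Requirement~\ref{req:shard_const}) follows from the recovery machinery already established, in particular Proposition~\ref{prop:recovery}. Recall what cross-shard-consistency demands: if one shard $\Shard$ commits (aborts) a transaction $\Transaction$, then every shard $\Shard' \in \Shards{\Transaction}$ eventually commits (aborts) $\Transaction$. The key structural observation of \oCB{} is that the commit/abort decision at every shard is a deterministic function of a \emph{single} shared object, namely the global preprepare certificate $M$: the decide-outcome step commits exactly when $I(\Shard'',\Transaction) = D(\Shard'',\Transaction)$ for all entries $(\Request{\Transaction}{\Client}, I(\Shard'',\Transaction), D(\Shard'',\Transaction)) \in M$. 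Therefore, once I establish that all good replicas in all affected shards reach the commit phase for the \emph{same} $M$, consistency of the outcome is immediate, since each shard applies the same deterministic predicate to identical data.

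First I would observe that for any good replica to \emph{commit} or \emph{abort} $\Transaction$ in round $\rn$, it must have finished the global commit phase for some global preprepare certificate $M$ with $m(\Shard,\Transaction)_{\View,\rn} \in M$; reaching the global commit phase requires, for every shard $\Shard' \in \Shards{\Transaction}$, a local commit certificate $C(\Shard')$ consisting of commit messages from $\nf{\Shard'}$ distinct replicas. Since each shard has $\nf{\Shard'} > 2\f{\Shard'}$, at least $\nf{\Shard'} - \f{\Shard'}$ of those commit messages come from good replicas, and a good replica only sends a commit message for $M$ after finishing the global prepare phase for $M$. Hence, the very fact that one shard commits/aborts $\Transaction$ witnesses that at least $\nf{\Shard'} - \f{\Shard'}$ good replicas in \emph{every} affected shard $\Shard'$ reached the global commit phase for this $M$, which is precisely the premise of Proposition~\ref{prop:recovery}.

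With that premise discharged, I would invoke Proposition~\ref{prop:recovery} for each shard $\Shard^{\dag} \in \Shards{\Transaction}$ in turn: it guarantees that \emph{every} good replica in $\Shard^{\dag}$ will eventually accept this same $M$ once communication becomes reliable. Uniqueness of $M$ across shards is what makes the final outcomes agree, and this is exactly what Lemma~\ref{lem:non_divergence} provides: no two distinct transactions can both reach the global commit phase in the same round $\rn$ and view $\View$ for certificates both containing a preprepare message of $\Shard$, so there is no ambiguity about \emph{which} $M$ is being recovered. I would then close the argument by noting that each good replica, having accepted the identical certificate $M$, evaluates the same decide-outcome predicate on the identical tuples in $M$ and therefore reaches the identical commit-or-abort verdict. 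Thus if $\Shard$ committed, every $\Shard' \in \Shards{\Transaction}$ commits, and symmetrically for abort.

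The main obstacle I anticipate is not in the recovery invocation itself but in carefully justifying that an \emph{abort} decision also propagates consistently. Unlike commits, the text allows early aborts: a shard may abort after seeing the first message with $I(\Shard'',\Transaction) \neq D(\Shard'',\Transaction)$, potentially before assembling a full global commit certificate. I would need to argue that such an early abort is still anchored to a well-defined $M$ whose relevant mismatched component $(\Request{\Transaction}{\Client}, I(\Shard'',\Transaction), D(\Shard'',\Transaction))$ is fixed and recoverable, so that no other shard can legitimately derive a conflicting commit from a different $M$ for the same round. The cleanest way to handle this is to note that the cross-shard exchange underlying each shard's entry in $M$ is itself certified (carried via local prepare/commit certificates backed by cluster-sending), so the abort-inducing component of $M$ is authenticated and cannot be contradicted; combined with Lemma~\ref{lem:non_divergence} ensuring a unique $M$ per round, every shard sees the same mismatch and aborts. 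This abort-side reasoning, rather than the commit-side application of Proposition~\ref{prop:recovery}, is where I expect the argument to require the most care.
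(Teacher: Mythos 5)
Your proof takes essentially the same route as the paper's: from the fact that a good replica commits or aborts $\Transaction$ you extract local commit certificates witnessing that $\nf{\Shard'}-\f{\Shard'}$ good replicas in every affected shard reached the global commit phase for $M$, invoke Proposition~\ref{prop:recovery} so that every good replica in every affected shard accepts the same $M$, and conclude identical verdicts because the decide-outcome predicate is a deterministic function of $M$ (with Lemma~\ref{lem:non_divergence} supplying the uniqueness that the paper leaves implicit inside Proposition~\ref{prop:recovery}). Your closing concern about early aborts addresses an optimization the paper only introduces for \cCB{}, not for \oCB{}, where the decision is made only after the global commit phase, so that extra care is unnecessary here but does no harm.
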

\begin{proof}
Assume a single good replica $\Replica \in \Shard$ commits or aborts a transaction $\Transaction$. Hence, it accepted some global preprepare certificate $M$ with $m(\Shard, \Transaction)_{\View,\rn} \in M$. Consequently, $\Replica$ has local commit certificates $C(\Shard')$ for $M$ of every $\Shard' \in \Shards{\Transaction}$. Hence, at least $\nf{\Shard'} -\f{\Shard'}$ good replicas in $\Shard'$ reached the global commit phase for $M$, and we can apply Proposition~\ref{prop:recovery} to conclude that any good replica $\Replica'' \in \Shard''$, $\Shard''\in \Shards{\Transaction}$ will accept $M$. As $\Replica''$ bases its commit or abort decision for $\Transaction$ on the same global prepare certificate $M$ as $\Replica$, they will both make the same decision, completing the proof.
\end{proof}

As already argued, it is straightforward to use the details of Theorem~\ref{thm:ccb} to prove that \oCB{} provides \emph{validity}, \emph{shard-involvement}, and \emph{shard-applicability}. Via Theorem~\ref{thm:ocb_cross_cons}, we proved \emph{cross-shard-consistency}. We cannot prove \emph{service} and \emph{confirmation}, however. The reason for this is simple: even though \oCB{} can detect and recover from accidental faulty behavior and accidental concurrent transactions, \oCB{} is not designed to gracefully handle targeted attacks.

\begin{example}\label{ex:concurrent_ocb}
Recall the situation of Example~\ref{ex:ccb_concurrent}. Next, we illustrate how \oCB{} deals with these concurrent transactions. We again consider distinct transactions $\Request{\Transaction_1}{\Client_1}$ and $\Request{\Transaction_2}{\Client_2}$ with $\Inputs{\Transaction_1} = \Inputs{\Transaction_2} = \{ \Object_1, \Object_2 \}$ and with $\ObjectShard{\Object_1} = \Shard_1$ and $\ObjectShard{\Object_2} = \Shard_2$. We assume that $\Shard_1$ processes $\Transaction_1$ first and $\Shard_2$ processes $\Transaction_2$ first.

The primary $\Primary{\Shard_1}$ will propose $\Transaction_1$ by prepreparing $m(\Shard_1, \Transaction_1)_{\View_1, \rn_1}$. In doing so, $\Primary{\Shard_1}$ sends $m(\Shard_1, \Transaction_1)_{\View_1, \rn_1}$ to all replicas $\Shard_1 \union \Shard_2$. Next, the replicas in $\Shard_1$ will wait for a message $m(\Shard_2, \Transaction_1)_{\View_2', \rn_2'}$ from $\Shard_2$. At the same time,  $\Primary{\Shard_2}$ already proposed $\Transaction_2$ at the same time by sending out $m(\Shard_2, \Transaction_2)_{\View_2, \rn_2}$, and the replicas in $\Shard_2$ will wait for a message $m(\Shard_1, \Transaction_2)_{\View_1', \rn_1'}$. Hence, the replicas in $\Shard_1$ will never receive $m(\Shard_2, \Transaction_1)_{\View_2', \rn_2'}$ and the replicas in $\Shard_2$ will never receive $m(\Shard_1, \Transaction_2)_{\View_1', \rn_1'}$. Consequently, no replica will finish global preprepare, the consensus round will fail for all replicas, and all good replicas will initiate a view-change. As no replica reached the global prepare phase, transactions $\Transaction_1$ and $\Transaction_2$ do not need to be recovered during the view-change. After the view-changes, both $\Shard_1$ and $\Shard_2$ can process other transactions (or retry $\Transaction_1$ or $\Transaction_2$), but if they both process $\Transaction_1$ and $\Transaction_2$ again, the system will again initiate a view-change.
\end{example}

As said before, \oCB{} is optimistic in the sense that it is optimized for the situation in which faulty behavior (including concurrent transactions) is rare. Still, in all cases, \oCB{} maintains cross-shard consistency, however. Moreover, in the optimistic case in which shards have good primaries and no concurrent transactions exist, progress is guaranteed whenever communication is reliable:

\begin{proposition}
If, for all shards $\Shard^{\ast}$, $\nf{\Shard^{\ast}} > 2\f{\Shard^{\ast}}$, and Assumptions~\ref{ass:coordinate}, \ref{ass:inout}, \ref{ass:wellclient}, and \ref{ass:shard_minimality} hold, then Optimistic-\CB{} satisfies Requirements~\ref{req:validity}--\ref{req:shard_confirm} in the optimistic case.
\end{proposition}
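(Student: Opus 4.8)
The plan is to prove each of the six requirements in the optimistic case, leveraging heavily the machinery already established. The optimistic case means: every shard has a good primary, communication is reliable, and there are no concurrent transactions (by Assumption~\ref{ass:wellclient}, no well-behaved owner supports two transactions consuming the same object, so concurrency only arises from malicious owners, which we exclude). First I would dispatch the three requirements already known: \emph{validity}, \emph{shard-involvement}, and \emph{shard-applicability} follow exactly as in Theorem~\ref{thm:ccb}, since good replicas discard invalid transactions or transactions with $\Shard \notin \Shards{\Transaction}$, and the decide-outcome step inspects the $I(\Shard',\Transaction) = D(\Shard',\Transaction)$ condition drawn from the global preprepare certificate $M$. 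Then \emph{cross-shard-consistency} follows immediately from Theorem~\ref{thm:ocb_cross_cons}, which holds unconditionally (not just optimistically). So the real content is \emph{service} and \emph{confirmation}, which are exactly the requirements the preceding discussion admits \oCB{} cannot guarantee in general.

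The key step is therefore to show that in the optimistic case the normal-case message flow of \oCB{} actually completes, so that progress is guaranteed. I would argue as follows. Suppose a well-behaved client $\Client$ requests a valid, shard-applicable transaction $\Transaction$. Since there are no concurrent transactions competing for any input object of $\Transaction$, the failure mode illustrated in Example~\ref{ex:concurrent_ocb} does not arise: no shard is simultaneously pledged to a rival transaction. With all primaries good and communication reliable, each primary $\Primary{\Shard''}$, $\Shard'' \in \Shards{\Transaction}$, will preprepare $m(\Shard'',\Transaction)_{\View,\rn}$ and broadcast it to all replicas of all affected shards. Every good replica then receives the full global preprepare certificate $M = \{ m(\Shard'',\Transaction)_{\View,\rn} \mid \Shard'' \in \Shards{\Transaction} \}$ and finishes the global preprepare phase. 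Since there are at least $\nf{\Shard'} > 2\f{\Shard'}$ good replicas per shard and all of them send prepare messages for $M$ (no malicious primary suppresses them, communication being reliable), every good replica collects a local prepare certificate $P(\Shard')$ from each affected shard, hence a global prepare certificate, and proceeds to the commit phase identically. The same counting argument yields local commit certificates $C(\Shard')$ for every shard, so every good replica finishes the global commit phase and reaches the decide-outcome step.

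Having established that the normal case completes, \emph{service} follows: every affected shard processes $\Request{\Transaction}{\Client}$, and because $\Transaction$ is shard-applicable, each decide-outcome step observes $I(\Shard',\Transaction) = D(\Shard',\Transaction)$ for every $\Shard' \in \Shards{\Transaction}$ and commits, executing $\Transaction$. For \emph{confirmation}, I would invoke the inform step: each of the $\nf{\Shard'}$ good replicas in each shard sends $\Client$ its outcome, so $\Client$ receives at least $\nf{\Shard'} - \f{\Shard'}$ identical messages from distinct good replicas in every shard; since faulty replicas cannot forge a contradicting quorum (there are only $\f{\Shard'} < \nf{\Shard'} - \f{\Shard'}$ of them under $\nf{\Shard'} > 2\f{\Shard'}$), $\Client$ reliably learns the true committed outcome.

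The main obstacle I anticipate is pinning down the definition of the ``optimistic case'' precisely enough that the argument is airtight, rather than a routine appeal. In particular I must be careful that \emph{no concurrent transactions} is the right hypothesis: the absence of concurrency must rule out every deadlock scenario of Example~\ref{ex:concurrent_ocb}, and I should verify this rests on Assumption~\ref{ass:wellclient} together with the object-dataset observation (following Theorem~\ref{thm:ccb}) that each object participates in at most two committed transactions, which cannot overlap in time. A secondary subtlety is confirming that ``good primaries and reliable communication'' suffices to prevent any of the six failure sub-cases enumerated in Example~\ref{ex:ocb_failures}; each of those was attributed either to a malicious primary (excluded) or unreliable communication (excluded) or a malicious/absent client (excluded since $\Client$ is well-behaved and sends $\Transaction$ to at least one, hence via internal propagation to all, affected shards). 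Once those exclusions are stated cleanly, the progress argument is a direct quorum count and the rest is bookkeeping against the already-proved requirements.
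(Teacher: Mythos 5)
Your proposal is correct and follows exactly the route the paper intends: the paper states this proposition without a formal proof, but the surrounding text explicitly delegates validity, shard-involvement, and shard-applicability to the argument of Theorem~\ref{thm:ccb}, cross-shard-consistency to Theorem~\ref{thm:ocb_cross_cons}, and characterizes the optimistic case (good primaries, no concurrent transactions, reliable communication) as precisely the setting in which the normal-case flow completes, yielding service and confirmation. Your quorum-counting argument for why the global preprepare, prepare, and commit phases all finish under these hypotheses supplies the missing progress step in the same spirit as Example~\ref{ex:ocb_failures} read in reverse.
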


If the optimistic assumption does not hold, then this can result in coordinated attempts to prevent \oCB{} from making progress. At the core of such attacks is the ability for malicious clients and malicious primaries to corrupt the operations of shards coordinated by good primaries, as already shown in Example~\ref{ex:ocb_failures}.  To reduce the impact of targeted attacks, one can opt to make primary election non-deterministic, e.g., by using shard-specific distributed coins to elect new primaries in individual shards~\cite{coin,bdrandcoin}.

As a final note, we remark that we have presented \oCB{} with a per-round checkpoint and recovery method. In this simplified design, the recovery path only has to recover at-most a single round. Our approach can easily be generalized to a more typical multi-round checkpoint and recovery method, however. Furthermore, we believe that the way in which \oCB{} extends \Pbft{} can easily be generalized to other consensus protocols, e.g., \HS{}.

\section{Pessimistic-\CB{}: transaction processing under attack}\label{sec:pcb}

In the previous section, we introduced \oCB{}, a general-purpose minimalistic and efficient multi-shard transaction processing protocol. \oCB{} is designed with the assumption that malicious behavior is rare, due to which it can minimize coordination in the normal-case while requiring intricate coordination when recovering from attacks. As an alternative to the optimistic approach of \oCB{},  we can apply a \emph{pessimistic} approach to \cCB{} to gracefully recover from concurrent transactions that is geared towards minimizing the influence of malicious behavior altogether. Next, we explore such a pessimistic design via \emph{Pessimistic-\CB} (\pCB{}). 

The design of \pCB{} builds upon the design of \cCB{} by adding additional coordination to the cross-shard exchange and decide outcome steps. As in \cCB{}, the acceptance of $m(\Shard, \Transaction)_{\rn}$ in round $\rn$ by all good replicas completes the \emph{local inputs} step. Before cross-shard exchange, the replicas in $\Shard$ destruct the objects in $D(\Shard, \Transaction)$, thereby fully pledging these objects to $\Transaction$ until the commit or abort decision. Then, $\Shard$ performs cross-shard exchange by broadcasting $m(\Shard, \Transaction)_{\rn}$ to all other shards in $\Shards{\Transaction}$, while the replicas in $\Shard$ wait until they receive messages $m(\Shard', \Transaction)_{\rn'} = (\Request{\Transaction}{\Client}, I(\Shard', \Transaction), D(\Shard', \Transaction))$ from all other shards $\Shard' \in \Shards{\Transaction}$.

After cross-shard exchange comes the final \emph{decide outcome} step. After $\Shard$ receives $m(\Shard', \Transaction)_{\rn'}$ from all shards $\Shard' \in \Shards{\Transaction}$, the replicas force a \emph{second consensus step} that determines the round $\rn^{\ast}$ at which $\Shard$ decides \emph{commit} (whenever  $I(\Shard', \Transaction) = D(\Shard', \Transaction)$ for all $\Shard' \in \Shards{\Transaction}$) or \emph{abort}. If $\Shard$ decides commit, then, in round $\rn^{\ast}$, all good replicas in $\Shard$ construct all objects $\Object \in \Outputs{\Transaction}$ with $\Shard = \ObjectShard{\Object}$. If $\Shard$ decides abort, then, in round $\rn^{\ast}$, all good replicas in $\Shard$ reconstruct all objects in $D(\Shard, \Transaction)$ (rollback). Finally, each good replica informs $\Client$ of the outcome of execution. If $\Client$ receives, from every shard $\Shard' \in \Shards{\Transaction}$, identical outcomes from $\nf{\Shard'} - \f{\Shard}$ distinct replicas in $\Shard'$, then it considers $\Transaction$ to be successfully executed. In Figure~\ref{fig:flow_pcb}, we sketched the working of \pCB{}.

\begin{figure}[t!]
    \centering
    \scalebox{0.825}{
    \begin{tikzpicture}[yscale=0.625,xscale=1.45,every edge/.append style={thick},label/.append style={below=5pt,align=center,font=\footnotesize}]
        \node[left] (c) at  (0.75, 3.5) {$\Client$};
        \node[left] (s1) at (0.75, 2) {$\Shard_1$};
        \node[left] (s2) at (0.75, 1) {$\Shard_2$};
        \node[left] (s3) at (0.75, 0) {$\Shard_3$};
        
        \path (0.75, 0) edge ++(9.45, 0)
              (0.75, 1) edge ++(9.45, 0)
              (0.75, 2) edge ++(9.45, 0)
              (0.75, 3.5) edge[green!70!black] ++(9.45, 0);

        \path (1, 0) edge ++(0, 3.5)
              (2, 0) edge ++(0, 3.5)
              (5, 0) edge ++(0, 3.5)
              (6, 0) edge ++(0, 3.5)
              (9, 0) edge ++(0, 3.5)
              (10, 0) edge ++(0, 3.5);

        \path[->] (1, 3.5) edge node[above] {$\Request{\Transaction}{\Client}$} (2, 2) edge  (2, 1) edge  (2, 0)
                  (5, 2) edge (6, 1) edge (6, 0)
                  (5, 1) edge (6, 2) edge (6, 0)
                  (5, 0) edge (6, 2) edge (6, 1)
                  (9, 0) edge (10, 3.5)
                  (9, 1) edge (10, 3.5)
                  (9, 2) edge (10, 3.5);
        
        \draw[draw=orange!60,fill=orange!40,rounded corners] (2.1, -0.4) rectangle (4.9, 0.4);
        \draw[draw=orange!60,fill=orange!40,rounded corners] (2.1,  0.6) rectangle (4.9, 1.4);
        \draw[draw=orange!60,fill=orange!40,rounded corners] (2.1,  1.6) rectangle (4.9, 2.4);
        \node at (3.5, 0) {Consensus on $\Request{\Transaction}{\Client}$};
        \node at (3.5, 1) {Consensus on $\Request{\Transaction}{\Client}$};
        \node at (3.5, 2) {Consensus on $\Request{\Transaction}{\Client}$};

        \draw[draw=orange!60,fill=orange!40,rounded corners] (6.1, -0.4) rectangle (8.9, 0.4);
        \draw[draw=orange!60,fill=orange!40,rounded corners] (6.1,  0.6) rectangle (8.9, 1.4);
        \draw[draw=orange!60,fill=orange!40,rounded corners] (6.1,  1.6) rectangle (8.9, 2.4);
        \node at (7.5, 0) {Consensus on Commit/Abort};
        \node at (7.5, 1) {Consensus on Commit/Abort};
        \node at (7.5, 2) {Consensus on Commit/Abort};
        
        \node[label] at (3.5, 0) {Local Inputs\\(Consensus)};
        \node[label] at (5.5, 0) {Cross-Shard Exchange\\(Cluster-Sending)};
        \node[label] at (7.5, 0) {Decide Outcome\\(Consensus)};
        \node[label] at (9.5, 0) {Inform};
        
        \node[label,above=5pt] at (5, 3.5) {\emph{destruction}};
        \node[label,above=5pt] at (9, 3.5) {\emph{construction} or \emph{rollback}};
    \end{tikzpicture}
    }
    \caption{The message flow of \pCB{} for a $3$-shard client request $\Request{\Transaction}{\Client}$ that is committed.}\label{fig:flow_pcb}
\end{figure}

We notice that processing a multi-shard transaction via \pCB{} requires \emph{two} consensus steps per shard. In some cases, we can eliminate the second step, however. First, if $\Transaction$ is a multi-shard transaction with $\Shard \in \Shards{\Transaction}$ and the replicas in $\Shard$ accept $(\Request{\Transaction}{\Client}, I(\Shard, \Transaction), D(\Shard, \Transaction))$ with $I(\Shard, \Transaction) \neq D(\Shard, \Transaction)$, then the replicas can immediately abort whenever they accept $(\Request{\Transaction}{\Client}, I(\Shard, \Transaction), D(\Shard, \Transaction))$. Second, if $\Transaction$ is a single-shard transaction with $\Shards{\Transaction} = \{\Shard\}$, then the replicas in $\Shard$ can immediately decide commit or abort whenever they accept $(\Request{\Transaction}{\Client}, I(\Shard, \Transaction), D(\Shard, \Transaction))$. Hence, in both cases, processing of $\Transaction$ at $\Shard$ only requires a single consensus step at $\Shard$.

Next, we illustrate how \pCB{} deals with concurrent transactions.

\begin{example}\label{ex:concurrent_pcb}
Recall the situation of Example~\ref{ex:ccb_concurrent}. Next, we illustrate how \pCB{} deals with these concurrent transactions. We again consider distinct transactions $\Request{\Transaction_1}{\Client_1}$ and $\Request{\Transaction_2}{\Client_2}$ with $\Inputs{\Transaction_1} = \Inputs{\Transaction_2} = \{ \Object_1, \Object_2 \}$ and with $\ObjectShard{\Object_1} = \Shard_1$ and $\ObjectShard{\Object_2} = \Shard_2$. We assume that $\Shard_1$ processes $\Transaction_1$ first and $\Shard_2$ processes $\Transaction_2$ first.

Shard $\Shard_1$ will start by \emph{destructing} $\Object_1$ and sends $(\Request{\Transaction_1}{\Client_1}, \{ \Object_1 \}, \{ \Object_1 \})$ to $\Shard_2$. Next, $\Shard_1$ will wait, during which it receives $\Transaction_2$. At the same time, $\Shard_2$ follows similar steps for $\Transaction_2$ and sends $(\Request{\Transaction_2}{\Client_2}, \{ \Object_2 \}, \{ \Object_2 \})$ to $\Shard_1$. While $\Shard_1$ is waiting for information on $\Transaction_1$ from $\Shard_2$, it receives $\Transaction_2$ and starts processing of $\Transaction_2$. Shard $\Shard_1$ directly determines that $\Object_1$ does no longer  exist. Hence, it sends $(\Request{\Transaction_2}{\Client_2}, \{ \Object_1 \}, \emptyset \})$ to $\Shard_2$. Likewise, $\Shard_2$ will start processing of $\Transaction_1$, sending $(\Request{\Transaction_1}{\Client_1}, \{ \Object_2 \}, \emptyset)$ to $\Shard_1$ as a result.

After the above exchange, both $\Shard_1$ and $\Shard_2$ conclude that transactions $\Transaction_1$ and $\Transaction_2$ must be aborted, which they eventually both do, after which $\Object_1$ is restored in $\Shard_1$ and $\Object_2$ is restored in $\Shard_2$.
\end{example}

We notice that this situation leads to \emph{both} transactions being aborted. Furthermore, we see that even though transactions get aborted, individual replicas can all determine whether their shard performed the necessary steps and, hence, whether their primary operated correctly. Next, we prove the correctness of \pCB{}:

\begin{theorem}\label{thm:pcb}
If, for all shards $\Shard^{\ast}$, $\nf{\Shard^{\ast}} > 2\f{\Shard^{\ast}}$, and Assumptions~\ref{ass:coordinate}, \ref{ass:inout}, \ref{ass:wellclient}, and \ref{ass:shard_minimality} hold, then Pessimistic-\CB{} satisfies Requirements~\ref{req:validity}--\ref{req:shard_confirm}.
\end{theorem}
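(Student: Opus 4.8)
The plan is to split the six requirements into three groups, reusing Theorem~\ref{thm:ccb} wherever possible and concentrating the new work on \emph{service} and \emph{confirmation}. For \emph{validity}~(\ref{req:validity}), \emph{shard-involvement}~(\ref{req:shard_inv}), and \emph{shard-applicability}~(\ref{req:shard_app}), I would argue these transfer essentially verbatim from \cCB{}: \pCB{} performs the same local checks on $\Request{\Transaction}{\Client}$ (discarding $\Transaction$ when it is invalid or when $\Shard \notin \Shards{\Transaction}$) and commits under the identical condition $I(\Shard', \Transaction) = D(\Shard', \Transaction)$ for all $\Shard' \in \Shards{\Transaction}$. The only new normal-case action is the pre-emptive destruction of $D(\Shard, \Transaction)$ before cross-shard exchange, which touches neither the validity/involvement filters nor the commit condition, so the arguments of Theorem~\ref{thm:ccb} apply unchanged.

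Next, for \emph{cross-shard-consistency}~(\ref{req:shard_const}), I would observe that the commit/abort outcome is a deterministic function of the collected messages $\{ m(\Shard', \Transaction)_{\rn'} \mid \Shard' \in \Shards{\Transaction} \}$, and that relying on cluster-sending for cross-shard exchange (exactly as in Theorem~\ref{thm:ccb}) guarantees every affected shard collects the same certified set of these messages; hence all affected shards compute the same outcome. The new \emph{second consensus step} then only serializes this outcome into a fixed round $\rn^{\ast}$ inside each shard and forces the good replicas of that shard to agree on it. The point I would verify carefully is that good replicas accept a second-step proposal only when its outcome matches the one dictated by the certified cross-shard messages, so that a Byzantine primary cannot drive two shards to diverging decisions. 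Because construction-on-commit and reconstruction-on-abort (rollback) are deterministic local operations, identical decisions yield identical global effects, giving~(\ref{req:shard_const}).

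The bulk of the new work, and the main obstacle, is \emph{service}~(\ref{req:shard_service}) and \emph{confirmation}~(\ref{req:shard_confirm}), which \cCB{} provides only for transactions untouched by malicious clients or owners and which \oCB{} cannot provide at all. The decisive new ingredient is the rollback: on abort the destructed inputs in $D(\Shard, \Transaction)$ are reconstructed, so---unlike in \cCB{}---no object is permanently locked out by a concurrent or malicious transaction. I would argue service as follows: by \emph{internal propagation} a valid transaction of a well-behaved client reaches every shard in $\Shards{\Transaction}$, and by the force-decision guarantee of Definition~\ref{def:consensus} each such shard eventually processes it; any transaction concurrently contending for the same object must, by Assumption~\ref{ass:wellclient}, involve a malicious owner and, as in Example~\ref{ex:concurrent_pcb}, resolves by aborting with its objects restored, so that a shard-applicable valid transaction keeps (or regains) available inputs and is eventually committed. \emph{Confirmation} then follows because every good replica informs $\Client$ of the agreed outcome: since $\nf{\Shard'} > 2\f{\Shard'}$ gives $\nf{\Shard'} - \f{\Shard'} > \f{\Shard'}$, the $\nf{\Shard'} - \f{\Shard'}$ identical inform-messages the client awaits from each shard necessarily include a good replica (so they certify the true outcome) and are attainable from the good replicas alone (so the client eventually collects them).

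I expect the hard part to be the liveness argument for~(\ref{req:shard_service}): proving that the rollback mechanism yields genuine forward progress rather than merely avoiding permanent lockout. Concretely, I must show that each episode of contention terminates in abort-with-restoration and does not indefinitely starve the well-behaved client's transaction. Here I would lean on Assumption~\ref{ass:wellclient} to confine genuine conflicts to objects with malicious owners, on internal propagation together with the force-decision property of Definition~\ref{def:consensus} to guarantee re-proposal once inputs stabilize, and on the per-shard threshold $\nf{\Shard^{\ast}} > 2\f{\Shard^{\ast}}$ wherever a quorum or majority-of-good argument is required.
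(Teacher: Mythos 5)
Your proposal is correct and follows essentially the same route as the paper: validity, shard-involvement, shard-applicability, and cross-shard-consistency are carried over verbatim from the argument for Theorem~\ref{thm:ccb}, and service/confirmation are obtained by letting a well-behaved client force acceptance at one shard (via the force-decision property of Definition~\ref{def:consensus} and internal propagation) and then bootstrapping to all of $\Shards{\Transaction}$ through cross-shard consistency, with the $\nf{\Shard'} - \f{\Shard'} > \f{\Shard'}$ count certifying the reported outcome. If anything, you are more careful than the paper, which does not explicitly argue that the second consensus step cannot diverge from the certified cross-shard messages nor that rollback yields progress rather than starvation; both points are worth the extra scrutiny you give them.
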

\begin{proof}
Let $\Transaction$ be a transaction. As good replicas in $\Shard$ discard $\Transaction$ if it is invalid or if $\Shard \notin \Shards{\Transaction}$,  \pCB{} provides \emph{validity} and \emph{shard-involvement}. Next, \emph{shard-applicability} follow directly from the decide outcome step.

If a shard $\Shard$ commits or aborts transaction $\Transaction$, then it must have completed the decide outcome and cross-shard exchange steps. As $\Shard$ completed cross-shard exchange, all shards $\Shard' \in \Shards{\Transaction}$ must have exchanged the  necessary information to $\Shard$. By relying on cluster-sending for cross-shard exchange, $\Shard'$ requires cooperation of all good replicas in $\Shard'$ to exchange the necessary information to $\Shard$. Hence, we have the guarantee that these good replicas will also perform cross-shard exchange to any other shard $\Shard'' \in \Shards{\Transaction}$. Hence, every shard $\Shard'' \in \Shards{\Transaction}$ will receive the same information as $\Shard$, complete cross-shard exchange, and make the same decision during the decide outcome step, providing \emph{cross-shard consistency}.

A client can force service on a transaction $\Transaction$ by choosing a shard $\Shard \in \Shards{\Transaction}$ and sending $\Transaction$ to all good replicas in $\NonFaulty{\Shard}$. By doing so, the normal mechanisms of consensus can be used by the good replicas in $\NonFaulty{\Shard}$ to force acceptance on $\Transaction$ in $\Shard$ and, hence, bootstrapping acceptance on $\Transaction$ in all shards $\Shard' \in \NonFaulty{\Shard}$. Due to cross-shard consistency,  every shard in $\Shards{\Transaction}$ will perform the necessary steps to eventually inform the client. As all good replicas $\Replica \in \Shard$, $\Shard \in \Shards{\Transaction}$, will inform the client of the outcome for $\Transaction$, the majority of these inform-messages come from good replicas, enabling the client to reliably derive the true outcome. Hence, \cCB{} provides \emph{service} and \emph{confirmation}.
\end{proof}

\section{The strengths of \CB{}}\label{sec:compare}

Having introduced the three variants of \CB{} in Sections~\ref{sec:ccb},~\ref{sec:ocb}, and~\ref{sec:pcb}, we will now analyze the strengths and performance characteristics of each of the variants. First, we will show that \CB{} provides serializable execution~\cite{isolation,ansisql}. Second, we look at the ability of \CB{} to maximize per-shard throughput by supporting out-of-order processing. Finally, we compare the costs, the attainable performance, and the scalability of the three protocols.

\subsection{The ordering of transactions in \CB{}}
The data model utilized by \cCB{}, \oCB{}, and \pCB{} guarantees that any object $\Object$ can only be involved in at-most \emph{two} committed transactions: one that \emph{constructs} $\Object$ and another one that \emph{destructs} $\Object$. Assume the existence of such transactions $\Transaction_1$ and $\Transaction_2$ with $\Object \in \Outputs{\Transaction_1}$ and $\Object \in \Inputs{\Transaction_2}$. Due to \emph{cross-shard-consistency} (Requirement~\ref{req:shard_const}), the shard $\ObjectShard{\Object}$ will have to execute both $\Transaction_1$ and $\Transaction_2$. Moreover, due to \emph{shard-applicability} (Requirement~\ref{req:shard_app}), the shard $\ObjectShard{\Object}$ will execute $\Transaction_1$ strictly before $\Transaction_2$. Now consider the relation 
\[
 \TOrder \mathrel{:=} \{ (\Transaction, \Transaction') \mid 
(\text{the system committed to $\Transaction$ and $\Transaction'$}) \land
    (\Outputs{\Transaction} \intersect \Inputs{\Transaction'} \neq \emptyset) \}.
\]
Obviously, we have $\TOrder(\Transaction_1, \Transaction_2)$. Next, we will prove that all committed transactions are executed in a \emph{serializable} ordering. As a first step, we prove the following:

\begin{lemma}
If we interpret transactions as nodes and $\TOrder$ as an edge relation, then the resulting graph is \emph{acyclic}.
\end{lemma}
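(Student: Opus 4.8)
The plan is to show that $\TOrder$ admits a strictly increasing real-time (equivalently, causal) timestamp along every edge, so that a directed cycle would force some event to strictly precede itself. First I would dispose of self-loops: for any committed $\Transaction$, every object in $\Outputs{\Transaction}$ is freshly constructed and therefore carries an identifier distinct from that of any pre-existing (hence destructible) input, so $\Outputs{\Transaction} \intersect \Inputs{\Transaction} = \emptyset$ by the non-recreation semantics of the object-dataset model. Thus there are no edges $\TOrder(\Transaction,\Transaction)$, and it suffices to rule out cycles of length $k \geq 2$.

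Suppose, for contradiction, that $\Transaction_0 \TOrder \Transaction_1 \TOrder \dots \TOrder \Transaction_{k-1} \TOrder \Transaction_0$ is a cycle (indices taken modulo $k$). For each edge $\Transaction_i \TOrder \Transaction_{i+1}$ I would fix a witnessing object $\Object_i \in \Outputs{\Transaction_i} \intersect \Inputs{\Transaction_{i+1}}$ and its home shard $\Shard_i = \ObjectShard{\Object_i}$, and let $t_i$ denote the time at which $\Shard_i$ constructs $\Object_i$ while executing $\Transaction_i$. Since $\Object_i$ is an input of $\Transaction_{i+1}$ it must still exist when $\Transaction_{i+1}$ commits, and since objects are never recreated the unique constructor of $\Object_i$ is $\Transaction_i$; this is exactly the already-established fact that $\Shard_i$ executes $\Transaction_i$ strictly before $\Transaction_{i+1}$ (shard-applicability, Requirement~\ref{req:shard_app}).

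The heart of the argument is to turn these per-shard orderings into a single chain, i.e.\ to prove $t_i < t_{i+1}$ for every $i$. I would argue causally: in each \CB{} variant the output $\Object_{i+1}$ is constructed at $\Shard_{i+1}$ only during the \emph{decide-outcome} step of $\Transaction_{i+1}$, and that step is reached only after cross-shard exchange delivers to $\Shard_{i+1}$ the message of $\Shard_i$ pledging the available input $\Object_i$ (with $\Object_i \in D(\Shard_i,\Transaction_{i+1})$). That pledge presupposes that $\Object_i$ was already constructed and still present at $\Shard_i$, which happens at time $t_i$ and strictly before $\Shard_i$ processes $\Transaction_{i+1}$. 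Hence the construction of $\Object_{i+1}$ is causally downstream of the construction of $\Object_i$, giving $t_i < t_{i+1}$.

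Chaining around the cycle then yields $t_0 < t_1 < \dots < t_{k-1} < t_0$, which is impossible; therefore no cycle exists and the graph is acyclic. I expect the main obstacle to be precisely the cross-shard step of the previous paragraph: the per-shard execution order (the consensus journal) only orders $\Transaction_i$ before $\Transaction_{i+1}$ at the single shard $\Shard_i$, and these witnessing shards differ around the cycle, so nothing can be concluded from within-shard orders alone. The work is in invoking the protocol structure (cross-shard exchange together with cross-shard-consistency, Requirement~\ref{req:shard_const}) to link the \emph{consumption} of $\Object_i$ at $\Shard_i$ to the \emph{construction} of $\Object_{i+1}$ at $\Shard_{i+1}$, so that a genuine global strict order along edges emerges; I would note that this link holds uniformly for \cCB{}, \oCB{}, and \pCB{} because in all three the outputs are constructed only after the cross-shard exchange completes.
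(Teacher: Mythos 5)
Your proposal is correct and follows essentially the same argument as the paper: pick a witnessing object $\Object_i$ for each edge of a hypothetical cycle, let $t_i$ be the time $\ObjectShard{\Object_i}$ constructs $\Object_i$ while executing $\Transaction_i$, derive $t_i < t_{i+1}$ from shard-applicability together with the fact that every affected shard executes $\Transaction_{i+1}$ only after $\Object_i$ exists, and chain around the cycle to get $t_0 < \dots < t_0$. You are in fact more explicit than the paper at the one delicate step (justifying $t_i < t_{i+1}$ across the two different witnessing shards via the cross-shard exchange), and your separate dispatch of self-loops is a harmless refinement.
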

\begin{proof}
The proof is by contradiction. Let $G$ be the graph-interpretation of $\TOrder$. We assume that graph $G$ is cyclic. Hence, there exists transactions $\Transaction_0, \dots, \Transaction_{m-1}$ such that $\TOrder(\Transaction_i, \Transaction_{i+1})$, $0 \leq i < m-1$, and $\TOrder(\Transaction_{m-1}, \Transaction_0)$. By the definition of $\TOrder$, we can choose objects $\Object_i$, $0 \leq i < m$, with $\Object_i \in (\Outputs{\Transaction_i} \intersect \Inputs{\Transaction_{(i+1) \bmod m}})$. Due to \emph{cross-shard-consistency} (Requirement~\ref{req:shard_const}), the shard $\ObjectShard{\Object_i}$, $0 \leq i < m$, executed transactions $\Transaction_i$ and $\Transaction_{(i+1) \bmod m}$.

Consider $\Object_i$, $0 \leq i < m$, and let $t_i$ be the time at which shard $\ObjectShard{\Object_i}$ executed $\Transaction_i$ and constructed $\Object_i$. Due to \emph{shard-applicability} (Requirement~\ref{req:shard_app}), we know that shard $\ObjectShard{\Object_i}$ executed $\Transaction_{(i+1) \bmod m}$ strictly after $t_i$. Moreover, also shard $\ObjectShard{\Object_{(i+1) \bmod m}}$  must have executed $\Transaction_{(i+1) \bmod m}$ strictly after $t_i$ and we derive $t_i < t_{(i+1)\bmod m}$. Hence, we must have $t_0 < t_1 < \dots < t_{m-1} < t_0$, a contradiction. Consequently, $G$ must be acyclic.
\end{proof}

To derive a serializable execution order for all committed transactions, we simply construct a directed acyclic graph in which transactions are nodes and $\TOrder$ is the edge relation. Next, we \emph{topologically sort} the graph to derive the searched-for ordering. Hence, we conclude:

\begin{theorem}
A sharded fault-tolerant system that uses the object-dataset data model, processes UTXO-like transactions, and satisfies Requirements~\ref{req:validity}-\ref{req:shard_service} commits transactions in a serializable order.
\end{theorem}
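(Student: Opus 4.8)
The plan is to reduce \emph{serializability} to the acyclicity established in the preceding lemma, using the classical equivalence between conflict-serializability and acyclicity of the precedence graph. First I would make precise what serializable means here: a serial (total) order $<$ on the committed transactions such that executing the transactions one at a time in the order $<$ produces exactly the same final object-dataset as the actual concurrent, sharded execution. The goal is then to exhibit such an order and certify that it is execution-equivalent.

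The observation driving the proof is that the object-dataset model leaves essentially \emph{no room} for conflicts other than those recorded by $\TOrder$. Because every object can be constructed at most once and destructed at most once among committed transactions (as argued after Theorem~\ref{thm:ccb}, using Assumption~\ref{ass:wellclient} and the first-pledge rule), each object $\Object$ is the output of at most one committed transaction and the input of at most one committed transaction. Hence the only way two distinct committed transactions $\Transaction, \Transaction'$ can interact is through a shared object that one produces and the other consumes --- and this is exactly an edge $\TOrder(\Transaction, \Transaction')$. There are no write--write conflicts (no shared output) and no conflicting consumers (no shared input), so $\TOrder$ captures the complete dependency structure.

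From here the steps are short. By the preceding lemma the directed graph whose nodes are committed transactions and whose edges are $\TOrder$ is acyclic. I would take the transitive closure of $\TOrder$, which acyclicity turns into a strict partial order, and then invoke the order-extension principle (equivalently, a topological sort of the graph) to obtain a total order $<$ containing every $\TOrder$-edge; for a countably infinite stream of transactions this remains available via Szpilrajn's theorem. Finally I would argue that executing in the order $<$ reproduces the actual execution: since $<$ places every producer strictly before its consumer, each input object exists when its consuming transaction runs (shard-applicability, Requirement~\ref{req:shard_app}), each transaction touches the same objects at all involved shards (cross-shard-consistency, Requirement~\ref{req:shard_const}), and no unrecorded dependency can reorder two transactions into an inconsistent state. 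Thus $<$ is a valid serial order equivalent to the real execution.

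The main obstacle I anticipate is this final equivalence argument rather than the combinatorics: one must rule out \emph{all} conflicts not captured by $\TOrder$, which rests entirely on the single-construction/single-destruction property of objects, and one must confirm that a \emph{topological} order --- not merely a physical-time order --- genuinely yields the same outputs. The per-shard execution times $t_i$ used in the lemma are convenient for deriving acyclicity but are not a single global timestamp per transaction, so I would lean on the object-uniqueness property rather than on global timestamps to certify that any linear extension of $\TOrder$ is execution-equivalent.
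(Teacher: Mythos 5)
Your proposal matches the paper's argument: the paper likewise builds the directed graph with $\TOrder$ as the edge relation, invokes the preceding acyclicity lemma, and topologically sorts to obtain the serial order. The extra step you supply --- that the single-construction/single-destruction property of objects makes $\TOrder$ the \emph{complete} conflict relation among committed transactions, so any linear extension is execution-equivalent --- is left implicit in the paper, but it is exactly the right justification and consistent with the paper's discussion following Theorem~\ref{thm:ccb}.
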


We notice that \CB{} only provides serializability for \emph{committed} transactions. As we have seen in Example~\ref{ex:concurrent_pcb}, concurrent transactions are not executed in a serializable order, as they are aborted. It is this flexibility in dealing with aborted transactions that allows all variants of \CB{} to operate with minimal and fully-decentralized coordination between shards; while still providing strong isolation for all committed transactions. 

\subsection{Out-of-order processing in \CB{}}

In normal consensus-based systems, the \emph{latency} for a single consensus decision is ultimately determined by the message delay $\delta$. E.g., with the three-phase design of \Pbft{}, it will take at least $3\delta$ before a transaction that arrives at the primary is executed by all replicas. To minimize the influence of message delay on \emph{throughput}, some consensus-based systems support \emph{out-of-order decision making} in which the primary is allowed to maximize bandwidth usage by continuously proposing transactions for future rounds (while previous rounds are processed by the replicas). To illustrate this, one can look at fine-tuned implementations of \Pbft{} running at replicas that have sufficient memory buffers available~\cite{pbftj,icdcs}. In this setting, replicas can work on several consensus rounds at the same time by allowing the primary to propose for rounds within a window of rounds.

As the goal of \CB{} is to maximize performance---both in terms of latency (\oCB{}) and in terms of throughput---we have designed \CB{} to support out-of-order processing (if provided by the underlying consensus protocol, in the case of \cCB{} and \pCB{}). The only limitation to these out-of-order processing capabilities are with respect to transactions affecting a shared object: such transactions must be proposed strictly in-order, as otherwise the set of pledged inputs cannot be correctly determined by the good replicas. This is not a limitation for the normal-case operations, however, as such concurrent transactions only happen due to malicious behavior.

\subsection{A comparison of the three \CB{} variants}

Finally, we compare the practical costs of the three \CB{} multi-shard transaction processing protocols. First, in Figure~\ref{fig:compare_cb}, we provide a high-level comparison of the costs of each of the protocols to process a single transaction $\Transaction$ that affects $s = \abs{\Shards{\Transaction}}$ distinct shards. For the normal-case behavior, we compare the complexity in the number of \emph{consensus steps} per shard and the number of \emph{cross-shard exchange} steps between shards (which together determine the maximum throughput), and the number of \emph{consecutive communication phases} (which determines the minimum latency).

\begin{figure}[t!]
    \centering
    \small
    \setlength\tabcolsep{2pt}
    \begin{tabular}{l|ccc|cc}
                &\multicolumn{3}{c|}{Normal-case complexity}&Concurrent&View-changes\\
    Protocol name&Consensus&Exchange&Phases&Transactions&\\
    \hline
    \hline
    \cCB{} & $s$  & $1$ & $4$ & Objects pledged & Single-shard\\
    \oCB{} & $s$  & $3$ & $3$ & View-change \& Abort & Multi-shard\\
    \pCB{} & $2s$ & $1$ & $7$ & Normal-case Abort& Single-shard\\
    \end{tabular}
    \caption{Comparison of the three \CB{} protocols for processing a transaction that affects $s$ shards. We compare the normal-case complexity, how they deal with concurrent transactions (due to malicious clients), and how they deal with malicious primaries.}\label{fig:compare_cb}
\end{figure}

Next, we compare how the three protocols deal with malicious behavior by clients and by replicas. If no clients behave malicious, then all transactions will \emph{commit}. In all three protocols, malicious behavior by clients can lead to the existence of concurrent transactions that affect the same object. Upon detection of such concurrent transactions, all three protocols will \emph{abort}. The consequences of such an abort are different in the three protocols.

In \cCB{}, objects affected by aborted transactions remain pledged and cannot be reused. In practice, this loss of objects can provide an incentive for clients to not behave malicious, but does limit the usability of \cCB{} in non-incentivized environments. \oCB{} is optimized with the assumption that conflicting concurrent transactions are rare. When conflicts occur, they can lead to the failure of a global consensus round, which can lead to a view-change in one or more affected shards (even if none of the primaries is faulty). Finally, \pCB{} deals with concurrent transactions by aborting them via the normal-case of the protocol. To be able to do so, \pCB{} does require additional consensus steps, however.

The three \CB{} protocols are resilient against malicious replicas: only malicious primaries can affect the normal-case operations of these protocols. If malicious primaries behave sufficiently malicious to affect the normal-case operations, their behavior is detected, and the primary is replaced. In both \cCB{} and \pCB{}, dealing with a malicious primary in a shard can be done completely in isolation of all other shards. In \oCB{}, which is optimized with the assumption that failures are rare, the failure of a primary while processing a transaction $\Transaction$ can lead to view-changes in all shards affected by $\Transaction$.

\pgfplotstableread{
num_values_per_txn	num_shards	num_shard_steps	num_shard_steps_avg_shard	pbft_tput	ccb_tput	ocb_tput	pcb_tput
2	1	16777216	16777216	1252	1252	1252	1252
2	2	25160852	12580426	1252	972	943	700
2	4	29356693	7339173	1252	1488	1439	1029
2	8	31454753	3931844	1252	2664	2572	1818
2	16	32503483	2031467	1252	5063	4885	3435
2	32	33028980	1032155	1252	9880	9529	6685
2	64	33292028	520187	1252	19521	18827	13193
2	128	33423112	261118	1252	38810	37427	26213
2	256	33488808	130815	1252	77390	74628	52254
2	512	33522006	65472	1252	154547	149030	104335
2	1024	33538283	32752	1252	308867	297839	208501
2	2048	33546458	16380	1252	617506	595455	416832
2	4096	33550528	8191	1252	1234786	1190688	833495
2	8192	33552503	4095	1252	2469351	2381162	1666827
2	16384	33553497	2047	1252	4938480	4762108	3333488
}\dataTwo

\pgfplotstableread{
num_values_per_txn	num_shards	num_shard_steps	num_shard_steps_avg_shard	pbft_tput	ccb_tput	ocb_tput	pcb_tput
4	1	16777216	16777216	921	921	921	921
4	2	31457684	15728842	921	440	473	310
4	4	45874978	11468744	921	386	425	300
4	8	55543071	6942883	921	527	585	426
4	16	61076149	3817259	921	882	980	724
4	32	64029162	2000911	921	1617	1800	1337
4	64	65553594	1024274	921	3099	3451	2571
4	128	66327362	518182	921	6067	6760	5042
4	256	66717735	260616	921	12007	13379	9985
4	512	66913035	130689	921	23889	26620	19874
4	1024	67010978	65440	921	47653	53103	39653
4	2048	67060121	32744	921	95181	106068	79209
4	4096	67084884	16378	921	190236	211998	158322
4	8192	67096820	8190	921	380350	423863	316550
4	16384	67102828	4095	921	760579	847592	633006
}\dataFour

\pgfplotstableread{
num_values_per_txn	num_shards	num_shard_steps	num_shard_steps_avg_shard	pbft_tput	ccb_tput	ocb_tput	pcb_tput
8	1	16777216	16777216	602	602	602	602
8	2	33423730	16711865	602	237	282	170
8	4	60392383	15098095	602	129	164	108
8	8	88103683	11012960	602	116	151	103
8	16	108261010	6766313	602	154	201	139
8	32	120423859	3763245	602	250	328	228
8	64	127105399	1986021	602	450	591	413
8	128	130605790	1020357	602	854	1123	787
8	256	132397993	517179	602	1665	2190	1534
8	512	133304475	260360	602	3287	4324	3031
8	1024	133760291	130625	602	6532	8593	6024
8	2048	133988747	65424	602	13022	17131	12010
8	4096	134103708	32740	602	26002	34207	23983
8	8192	134161118	16377	602	51963	68360	47930
8	16384	134189728	8190	602	103883	136665	95822
}\dataEight

\pgfplotstableread{
num_values_per_txn	num_shards	num_shard_steps	num_shard_steps_avg_shard	pbft_tput	ccb_tput	ocb_tput	pcb_tput
16	1	16777216	16777216	356	356	356	356
16	2	33553959	16776979	356	130	166	95
16	4	66436677	16609169	356	58	81	50
16	8	118372927	14796615	356	34	50	31
16	16	172851484	10803217	356	31	46	30
16	32	213822319	6681947	356	41	61	39
16	64	239155878	3736810	356	66	98	63
16	128	253264062	1978625	356	118	175	113
16	256	260709675	1018397	356	223	331	215
16	512	264540336	516680	356	433	645	418
16	1024	266480101	260234	356	855	1271	824
16	2048	267455786	130593	356	1698	2525	1637
16	4096	267946341	65416	356	3384	5033	3263
16	8192	268191059	32738	356	6756	10049	6515
16	16384	268313976	16376	356	13499	20081	13018
}\dataSixteen

\pgfplotstableread{
num_values_per_txn	num_shards	num_shard_steps	num_shard_steps_avg_shard	pbft_tput	ccb_tput	ocb_tput	pcb_tput
32	1	16777216	16777216	196	196	196	196
32	2	33554432	16777216	196	68	91	50
32	4	67102106	16775526	196	29	44	25
32	8	132349287	16543660	196	14	22	13
32	16	234403684	14650230	196	8	14	8
32	32	342490153	10702817	196	8	13	8
32	64	425045691	6641338	196	10	17	10
32	128	476662676	3723927	196	16	27	16
32	256	505610360	1975040	196	30	48	29
32	512	520937896	1017456	196	56	91	55
32	1024	528829360	516434	196	110	177	108
32	2048	532833188	260172	196	217	349	214
32	4096	534850826	130578	196	432	693	424
32	8192	535864872	65413	196	861	1381	846
32	16384	536372624	32737	196	1718	2757	1689
}\dataThirtyTwo

\pgfplotstableread{
num_values_per_txn	num_shards	num_shard_steps	num_shard_steps_avg_shard	pbft_tput	ccb_tput	ocb_tput	pcb_tput
64	1	16777216	16777216	103	103	103	103
64	2	33554432	16777216	103	35	48	26
64	4	67108863	16777215	103	15	23	13
64	8	134191639	16773954	103	7	11	6
64	16	264121962	16507622	103	3	5	3
64	32	466507850	14578370	103	2	3	2
64	64	681851502	10653929	103	2	3	2
64	128	847544877	6621444	103	2	4	2
64	256	951701595	3717584	103	4	7	4
64	512	1010291333	1973225	103	7	12	7
64	1024	1041393177	1016985	103	14	24	14
64	2048	1057415383	516316	103	27	46	27
64	4096	1065553859	260144	103	54	91	54
64	8192	1069653436	130572	103	108	182	108
64	16384	1071710365	65412	103	217	363	215
}\dataSixtyFour

\begin{figure}[t!]
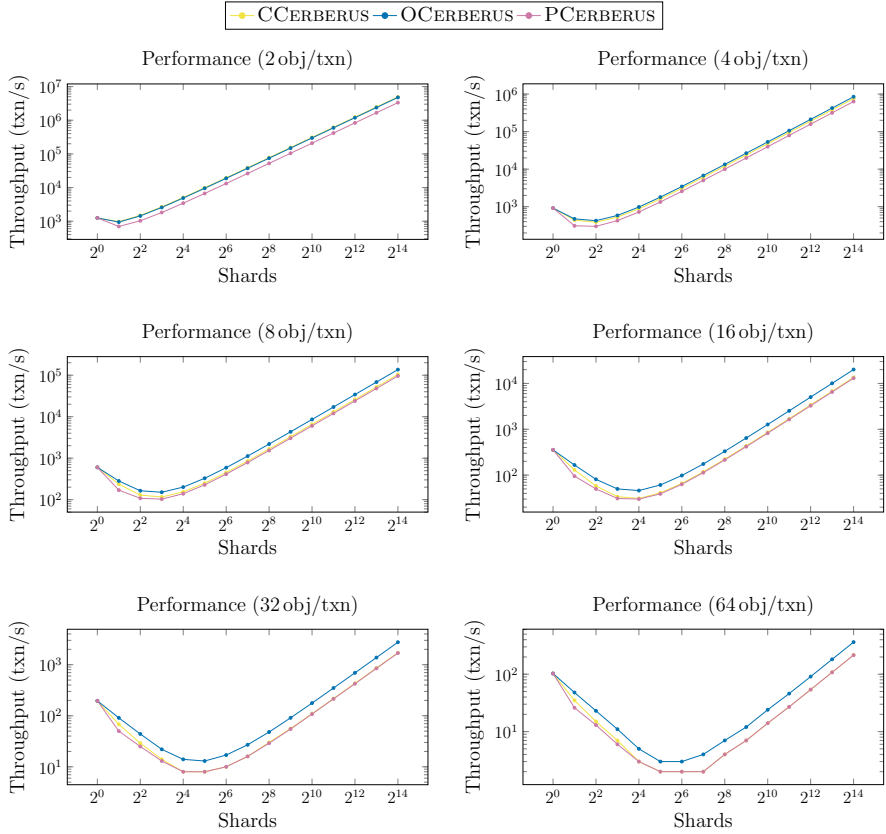

    \centering
    \scalebox{0.535}{\ref*{mainlegend}}\\[5pt]
    \begin{tabular}{ccc}
        \plotPerformance{\dataTwo}{2}{legend to name={mainlegend},legend columns=-1,legend entries={\cCB{},\oCB{},\pCB{}}}&
        \plotPerformance{\dataFour}{4}{}\\
        \\
        \plotPerformance{\dataEight}{8}{}&
        \plotPerformance{\dataSixteen}{16}{}\\
        \\
        \plotPerformance{\dataThirtyTwo}{32}{scaled y ticks=base 10:-3}&
        \plotPerformance{\dataSixtyFour}{64}{scaled y ticks=base 10:-3}
    \end{tabular}
    \caption{Throughput of the three \CB{} protocols as a function of the number of shards.}\label{fig:model_tput}
\end{figure}

Finally, we illustrate the performance of \CB{}. To do so, we have modeled the maximum throughput of each of these protocols in an environment where each shard has seven replicas (of which two can be faulty) and each replica has a bandwidth of $\SI{100}{\mega\bit\per\second}$. We have chosen to optimize \cCB{}, \oCB{}, and \pCB{} to minimize \emph{processing latencies} over minimizing bandwidth usage (e.g., we do not batch requests and the cross-shard exchange steps do not utilize threshold signatures; with these techniques in place we can boost throughput by a constant factor at the cost of the per-transaction processing latency). In Figure~\ref{fig:model_tput}, we have visualized the maximum attainable throughput for each of the protocols as function of the number of shards. In Figure~\ref{fig:model_steps}, we have visualized the number of per-shard steps performed by the system (for \cCB{} and \oCB{}, this is equivalent to the number of per-shard consensus steps, for \pCB{} this is half the number of per-shard consensus steps). As one can see from these figures, all three protocols have excellent scalability: increasing the number of shards will increase the overall throughput of the system. Sharding does come with clear overheads, however, increasing the number of shards also increases the number of shards affected by each transaction, thereby increasing the overall number of consensus steps. This is especially true for very large transactions that affect many objects (that can affect many distinct shards).

\begin{figure}[t!]
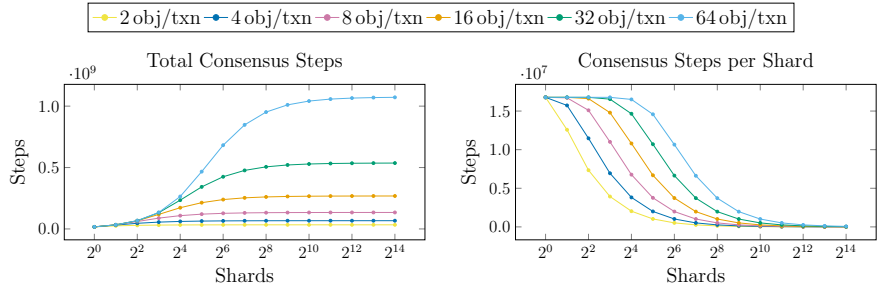

    \centering
    \scalebox{0.535}{\ref*{sizelegend}}\\[5pt]
    \begin{tabular}{cc}
        \plotSizes{num_shard_steps}{Total Consensus Steps}{
                legend to name={sizelegend},
                legend columns=-1,
                legend entries={\SI{2}{\text{obj}\per\text{txn}},
                                \SI{4}{\text{obj}\per\text{txn}},
                                \SI{8}{\text{obj}\per\text{txn}},
                                \SI{16}{\text{obj}\per\text{txn}},
                                \SI{32}{\text{obj}\per\text{txn}},
                                \SI{64}{\text{obj}\per\text{txn}}}}&
        \plotSizes{num_shard_steps_avg_shard}{Consensus Steps per Shard}{}
    \end{tabular}
    \caption{Amount of work, in terms of consensus steps, for the shards involved in processing the transactions.}\label{fig:model_steps}
\end{figure}

\section{Related Work}\label{sec:related}

Distributed systems are typically employed to either increase reliability (e.g., via consensus-based fault-tolerance) or to increase performance (e.g., via sharding). Consequently, there is abundant literature on such distributed systems, distributed databases, and sharding (e.g.,~\cite{distdb,distbook,distalgo}) and on consensus-based fault-tolerant systems (e.g.,~\cite{distalgo,wild,scaling,untangle,encybd}). Next, we shall focus on the few works that deal with sharding in fault-tolerant systems.

Several recent system papers have proposed specialized systems that combine sharding with consensus-based resilient systems. Examples include \Name{AHL}~\cite{ahl}, \Name{Caper}~\cite{caper}, \Name{Chainspace}~\cite{chainspace}, and \Name{SharPer}~\cite{sharper}, which all use sharding for data management and transaction processing. Systems such as \Name{AHL} and \Name{Caper} are designed with single-shard workloads in mind, as they rely on centralized orderers to order and process multi-shard transactions, whereas systems such as \Name{Chainspace} and \Name{SharPer} are closer to the decentralized design of \CB{}. In specific, \Name{Chainspace} uses a consensus-based commit protocol that performs three consecutive consensus and cross-shard exchange steps that resemble the two-step approach of \pCB{} (although the details of the recovery path are rather different). In comparison, \CB{} greatly improves on the design of \Name{Chainspace} by reducing the number of consecutive consensus steps necessary to process transactions and by introducing out-of-order transaction processing capabilities. Finally, \Name{SharPer} integrates global consensus steps in a consensus protocol in a similar manner as \oCB{}. Their focus is mainly on a crash-tolerant \Name{Paxos} protocol, however, and they do not fully explorer the details of a full Byzantine fault-tolerant recovery path.

A few fully-replicate consensus-based systems utilize sharding at the level of consensus decision making, this to improve consensus throughput~\cite{geobft,steward,algorand,omada}. In these systems, only a small subset of all replicas, those in a single shard, participate in the consensus on any given transaction, thereby reducing the costs to replicate this transaction without improving storage and processing scalability.  Finally, the recently-proposed \emph{delayed-replication algorithm} aims at improving scalability of resilient systems by separating fault-tolerant data storage from specialized data processing tasks~\cite{delay_rep}, the latter of which can be distributed over many participants.

Recently, there has also been promising work on sharding and techniques supporting sharding for permissionless blockchains. Examples include techniques to enable sidechains, blockchain relays, and atomic swaps~\cite{blockchaindb,btcrelay,atomswap,cosmos,polkadot}, which each enable various forms of cooperation between blockchains (including simple cross-chain communication and cross-chain transaction coordination). Unfortunately, these permissionless techniques are several orders of magnitudes slower than comparable techniques for traditional fault-tolerant systems, making them incomparable with the design of \CB{} discussed in this work.

\section{Conclusion}\label{sec:concl}   
In this paper, we introduced Core-\CB{}, Optimistic-\CB{}, and Pessimistic-\CB{}, three fully distributed approaches towards multi-shard fault-tolerant transaction processing. The design of these approaches is geared towards processing UTXO-like transactions in sharded distributed ledger networks with minimal cost, while maximizing performance. By using the properties of UTXO-like transactions to our advantage, both Core-\CB{} and Optimistic-\CB{} are optimized for cases with fewer expected malicious behaviors, in which case they are able to provide serializable transaction processing with only a single consensus step per affected shard, whereas Pessimistic-\CB{} is optimized to efficiently deal with a broad-range of malicious behavior at the cost of a second consensus step during normal operations.

The core ideas of \CB{} are not tied to any particular underlying consensus protocol. In this work, we have chosen to build \CB{} on top of \Pbft{}, as our experience shows that well-tuned implementations that use \emph{out-of-order processing} of this protocol can outperform most other protocols in raw throughput~\cite{icdcs}. Combining other consensus protocols with \CB{} will result in other trade-offs between maximum throughput, per-transaction latency, bandwidth usage, and (for protocols that do not support out-of-order processing) vulnerability to message delays. Applying the ideas of \CB{} fully onto other consensus protocols in a fully fine-tuned manner remains open, however. E.g., we are very interested in seeing whether incorporating \CB{} into the more-resilient four-phase design of \HS{} can sharply reduce the need for multi-shard view-changes in \oCB{} (at the cost of higher per-transaction latency).

\bibliographystyle{plainurl}
\bibliography{resources}

\end{document}